\def\eqref#1{equation~\ref{#1}}
\def\1{\bm{1}}
\DeclareMathAlphabet{\mathsfit}{\encodingdefault}{\sfdefault}{m}{sl}
\SetMathAlphabet{\mathsfit}{bold}{\encodingdefault}{\sfdefault}{bx}{n}
\newcommand{\E}{\mathbb{E}}
\newcommand{\R}{\mathbb{R}}
\newcommand{\Var}{\mathrm{Var}}
\newcommand{\Cov}{\mathrm{Cov}}
\DeclareMathOperator*{\argmin}{arg\,min}
\newcommand{\RelVar}{\mathrm{RelVar}}
\theoremstyle{plain}
\newtheorem{theorem}{Theorem}[section]
\newtheorem{lemma}[theorem]{Lemma}
\newtheorem{corollary}[theorem]{Corollary}
\theoremstyle{definition}
\newtheorem{definition}[theorem]{Definition}
\theoremstyle{remark}
\newtheorem{remark}[theorem]{Remark}
\newtheoremstyle{restate}   
  {}{}                      
  {\itshape}                
  {}                        
  {\bfseries}               
  {}                        
  { }                       
  {}                        
\theoremstyle{restate}
\newtheorem*{restate}{}
\icmltitlerunning{Breaking the Curse of Dimensionality: On the Stability of Modern Vector Retrieval}
\begin{document}

\twocolumn[
  \icmltitle{Breaking the Curse of Dimensionality: \\ On the Stability of Modern Vector Retrieval}



  \icmlsetsymbol{equal}{*}

  \begin{icmlauthorlist}
    \icmlauthor{Vihan Lakshman}{yyy}
    \icmlauthor{Blaise Munyampirwa}{comp}
    \icmlauthor{Julian Shun}{yyy}
    \icmlauthor{Benjamin Coleman}{sch}
  \end{icmlauthorlist}

  \icmlaffiliation{yyy}{MIT CSAIL, USA}
  \icmlaffiliation{comp}{Argmax, Inc., USA}
  \icmlaffiliation{sch}{Google DeepMind, USA}

  \icmlcorrespondingauthor{Vihan Lakshman}{vihan@mit.edu}

  \icmlkeywords{Machine Learning, ICML}

  \vskip 0.3in
]



\printAffiliationsAndNotice{}  

\begin{abstract}
Modern vector databases enable efficient retrieval over high-dimensional neural embeddings, powering applications from web search to retrieval-augmented generation. However, classical theory predicts such tasks should suffer from \emph{the curse of dimensionality}, where distances between points become nearly indistinguishable, thereby crippling efficient nearest-neighbor search. We revisit this paradox through the lens of \emph{stability}, the property that small perturbations to a query do not radically alter its nearest neighbors. Building on foundational results, we extend stability theory to three key retrieval settings widely used in practice: (i) multi-vector search, where we prove that the popular Chamfer distance metric preserves single-vector stability, while average pooling aggregation may destroy it; (ii) filtered vector search, where we show that sufficiently large penalties for mismatched filters can induce stability even when the underlying search is unstable; and (iii) sparse vector search, where we formalize and prove novel sufficient stability conditions. Across synthetic and real datasets, our experimental results match our theoretical predictions, offering concrete guidance for model and system design to avoid the curse of dimensionality.

\end{abstract}

\section{Introduction}

The \emph{curse of dimensionality} is a notorious barrier to algorithmic efficiency in computational geometry. Informally, the curse states that points in high-dimensional metric spaces tend to be arbitrarily close in distance. Consequently, this property makes it challenging for techniques such as near-neighbor search and near-neighbor classification to identify the closest points to a query without resorting to a full brute-force scan, which can be prohibitively slow in practice. Nevertheless, modern vector databases, powered by stunning advances in neural representation learning, have achieved remarkable adoption in spite of high vector dimensionalities, powering applications such as retrieval-augmented generation \citep{lewis2020retrieval}, web-search \citep{huang2020embedding}, and product search \citep{nigam2019semantic}. 

The success of modern vector retrieval raises a question that we find surprisingly unexplored in the current research literature: \emph{Given that vector databases efficiently search high-dimensional embedding spaces, what properties do these search algorithms and embeddings possess that make them amenable to sub-linear time search algorithms?} In addition to enhancing our theoretical understanding of the curse of dimensionality, successfully answering this question also has implications for machine learning model developers and vector database engineers. In particular, if we can identify necessary or sufficient conditions for embedding-based search to ``break'' the curse of dimensionality, practitioners can potentially leverage these insights when working to optimize models and search algorithms by maintaining the essential invariants that account for the success of these representations in high dimensions. 

In this paper, we study this question from a theoretical and experimental lens, using the formal notion of \emph{stability} to characterize the conditions under which the curse of dimensionality is overcome. We focus on three vector retrieval settings widely used in practice: 
\begin{enumerate}
\item \textbf{Multi-Vector Search} which involves performing near-neighbor search in the case where both the query and the database items consist of \emph{sets} of vectors. This style of search has become extremely popular due to the recent success of ``late interaction" models such as ColBERT \citep{khattab2020colbert}. 

\item \textbf{Filtered Vector Search} which involves near neighbor search in the presence of attribute filters. This setting has also become widely deployed in vector databases in practice \citep{gollapudi2023filtered, wang2023efficient} due to the need to filter results based on external metadata. 

\item \textbf{Sparse Vector Search} which is near neighbor search in the special case where the embedding vectors are extremely high-dimensional (such as in the tens or hundreds of thousands) but only a small number of components are non-zero. This search style has become popular due to the success of learned sparse representation models such as SPLADE \citep{formal2021spladev2, formal2021splade} and semantic models in recommender systems \citep{rajput2023recommender, penha2025semantic}.

\end{enumerate}

Intuitively, stability captures the notion that a slight perturbation to a query should not dramatically change the list of near neighbors to that query. The landmark paper of \citep{beyer1999nearest} and a subsequent follow-up work of \citep{durrant2009nearest} together establish a necessary and sufficient condition for stability via a quantity called \emph{relative variance}. However, to our knowledge, no work has studied extending these prior results to modern vector database settings, particularly in the three cases that we focus on in this paper. Thus, our work fills a gap in the literature in understanding why modern neural representations enable efficient sub-linear vector search in high dimensions. Our specific contributions are as follows:

\begin{itemize}

\item We prove that multi-vector search with the commonly-used Chamfer distance is stable if the \emph{induced} single vector search problem is \emph{strongly stable} (along with some technical conditions on the composition of the vector sets). However, not all aggregations are stability preserving; we give a counter-example showing that stability does not hold in general for average pooling. This result provides a theoretical justification for the choice of the Chamfer distance used in the popular ColBERT family of models \citep{khattab2020colbert, santhanam2022plaid, faysse2025colpali}. 

\item We prove that filtered vector search in the \emph{distance penalty} model is stable provided that the penalty function applied to points that do not satisfy the filter is sufficiently large. Interestingly, we show that, for a suitably large penalty, filtered search can be stable even if the underlying vector search problem (without filters) is unstable. We also discuss why this distance penalty model encompasses the majority of hybrid vector search algorithms in the literature. 

\item In the sparse vector search setting, we present the first formal definition of a property called \emph{concentration of importance}, which was intuitively described in a paper of \citep{bruch2024seismic}, as well as the definition of a new property we call \emph{overlap of importance}. We then prove that concentration of importance and overlap of importance (along with some minor technical conditions preventing non-degeneracy) together imply stability.

\item In addition to our theoretical results, we also empirically validate each of these theorems on a combination of synthetic and real embedding datasets\footnote{https://github.com/vihan-lakshman/ann-stability-theory}. 

\end{itemize}

\subsection{Additional Results in Appendix}

We defer several results to the appendix. In Appendix \ref{sec:intrinsic-dim}, we discuss prior work on \emph{intrinsic dimensionality}, a related concept to the theory of stability. In Appendix \ref{sec:practical-implications}, we demonstrate empirically that stability is not merely a theoretical curiosity but rather has substantial bearing on the quality of vector databases in practice. In the remaining Appendices (\ref{sec:multi-vec-appendix}, \ref{sec:filtered-appendix}, and \ref{sec:sparse-stability-appendix}), we provide full proofs for all of the new mathematical results introduced in this paper, and we empirically verify that the conditions of our theorems hold on real embedding datasets. 

\section{Background}

In this section, we introduce the necessary mathematical preliminaries for our main results. We begin with a formal definition of nearest-neighbor search.

\begin{definition}[Near-Neighbor Search Problem]

A \emph{near-neighbor search problem} is a 3-tuple $(Q, D, \delta)$ for a query set $Q$ and database collection $D$ where $q, d \in \mathbb{R}^m$ for all $q \in Q$ and $d \in D$. For each $q \in Q$, we use the distance metric $\delta\colon \mathbb{R}^m \times \mathbb{R}^m \to \mathbb{R}_{\ge 0}$ to compute $$\argmin_{d \in D} \delta(q, d)$$

\end{definition}

In the remainder of this paper, we will assume that all database elements $d \in D$ and all query elements $q \in Q$ are sampled from two respective distributions. Under this probabilistic model, the distance $\delta(q, d)$ for $q \sim Q$ and $d \sim D$ is a random variable.  In a seminal paper, \citet{beyer1999nearest} introduced this probabilistic model to provide a precise formalization of the curse of dimensionality in the context of near-neighbor search by defining the notion of instability.

\begin{definition}[Instability \citep{beyer1999nearest}]

A nearest neighbor problem $(Q, D, \delta)$ is \emph{unstable} if for all $\epsilon > 0$
$$\lim_{m \to \infty} 
\mathop{\Pr}\limits_{q \sim Q}
\left[
  \max_{d \in D} \{\delta(q, d)\}
  \le (1 + \epsilon)
  \min_{d \in D} \{\delta(q, d)\}
\right]
= 1$$

If this property does not hold, we say the problem is \emph{stable}. 
\end{definition}

\begin{remark}
For notational convenience, we will not explicitly index the random variable $\delta(q, d)$ by the dimensionality, but it is important to remember that $\delta(q, d)$ is a function of the dimension $m$. 
\end{remark}

We can also equivalently characterize instability as the convergence of the ratio of the maximum instance to the minimum distance as we take the dimensionality to infinity. 

\begin{corollary}[\citet{beyer1999nearest}]
For a nearest neighbor problem $(Q, D, \delta)$, let $q \sim Q$ and define $\mathrm{DMAX} = \max_d \delta(q, d)$ and $\mathrm{DMIN} = \min_d \delta(q, d)$. A nearest neighbor problem is unstable if and only if $\frac{\mathrm{DMAX}}{\mathrm{DMIN}} \to_p 1$ where the sequence is taken over the dimensionality.  
\end{corollary}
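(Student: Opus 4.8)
The plan is to show that the event appearing in the definition of instability is, up to an almost-sure equivalence, identical to the event defining convergence in probability of the ratio $\mathrm{DMAX}/\mathrm{DMIN}$ to $1$; the corollary then follows by matching the two limits verbatim. First I would unwind the definition of convergence in probability: the statement $\frac{\mathrm{DMAX}}{\mathrm{DMIN}} \to_p 1$ means precisely that for every $\epsilon > 0$ we have $\lim_{m \to \infty} \Pr_{q \sim Q}\left[\left|\frac{\mathrm{DMAX}}{\mathrm{DMIN}} - 1\right| > \epsilon\right] = 0$, or equivalently $\lim_{m \to \infty} \Pr_{q \sim Q}\left[\left|\frac{\mathrm{DMAX}}{\mathrm{DMIN}} - 1\right| \le \epsilon\right] = 1$.

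The key structural observation is that $\mathrm{DMAX} = \max_d \delta(q,d) \ge \min_d \delta(q,d) = \mathrm{DMIN} \ge 0$, since a maximum over a finite set always dominates the minimum and distances are nonnegative. Consequently, whenever $\mathrm{DMIN} > 0$ the ratio satisfies $\frac{\mathrm{DMAX}}{\mathrm{DMIN}} \ge 1$, so the absolute value is redundant and the event $\left|\frac{\mathrm{DMAX}}{\mathrm{DMIN}} - 1\right| \le \epsilon$ coincides with $\frac{\mathrm{DMAX}}{\mathrm{DMIN}} \le 1 + \epsilon$. Clearing the denominator (legitimate because $\mathrm{DMIN} > 0$) rewrites this as $\mathrm{DMAX} \le (1 + \epsilon)\,\mathrm{DMIN}$, which is exactly the event inside the probability in the definition of instability.

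Having matched the events for each fixed $\epsilon > 0$, I would conclude that the two probabilities are equal for every $m$, and hence their limits as $m \to \infty$ coincide; therefore the instability limit equals $1$ for all $\epsilon > 0$ if and only if $\frac{\mathrm{DMAX}}{\mathrm{DMIN}} \to_p 1$. The only real obstacle is the degenerate case $\mathrm{DMIN} = 0$, where the ratio is undefined. I would dispatch this by noting that under the probabilistic model of \citet{beyer1999nearest}---where queries and database points are drawn from (typically continuous) distributions with no point mass on coincidences---we have $\Pr[\mathrm{DMIN} = 0] = 0$, so the two events agree almost surely and the measure-zero discrepancy cannot affect any of the limiting probabilities. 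A secondary point worth stating carefully is that the universal quantifier over $\epsilon$ in the instability definition is the same universal quantifier that appears in the definition of convergence in probability, so no reparametrization of $\epsilon$ is required and the equivalence is immediate rather than merely asymptotic.
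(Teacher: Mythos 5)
Your proposal is correct and matches the paper's own reasoning: the paper states this corollary without a standalone proof, but the identical event-matching argument (using $\mathrm{DMAX} \ge \mathrm{DMIN} \ge 0$ to drop the absolute value and rewrite $\left|\frac{\mathrm{DMAX}}{\mathrm{DMIN}} - 1\right| \le \epsilon$ as $\mathrm{DMAX} \le (1+\epsilon)\,\mathrm{DMIN}$) appears verbatim inside the paper's appendix proof of Theorem~\ref{thm:multi-vec-equivalence}. Your explicit handling of the degenerate $\mathrm{DMIN}=0$ case is a minor point the paper leaves implicit, but it does not change the argument.
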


In other words, a nearest neighbor problem is unstable if the distance between the closest and farthest points across all queries becomes nearly indistinguishable. We can also interpret stability as the property that small perturbations to a query do not dramatically alter its list of near neighbors. This is a mathematical formalization of the curse of dimensionality, which states that high-dimensional spaces lack the contrast needed for meaningful near-neighbor search. In this setting, sublinear approximate search algorithms fail, and a brute force linear scan becomes necessary. 



To prove that a near-neighbor problem is stable or unstable, we will also stand on the shoulders of two prior theorems from \citep{beyer1999nearest} and \citep{durrant2009nearest}, respectively. 

\begin{theorem}[\citet{beyer1999nearest}]

If ~ $\lim_{m \to \infty} \frac{\Var\left[ \delta(q, d)\right]}{\E\left[\delta(q, d) \right]^2} = 0$, then the nearest neighbor problem is unstable. 
\label{beyerthm}
\end{theorem}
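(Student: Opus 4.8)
The plan is to show that the hypothesis forces every database distance, once normalized by its common mean, to concentrate at the single value $1$, so that the largest and smallest distances become indistinguishable in the limit. Write $\mu_m = \E[\delta(q,d)]$ and $\sigma_m^2 = \Var[\delta(q,d)]$, and let $n = |D|$. Since all database points are drawn from the same distribution, each individual distance $\delta(q, d_i)$ has this same marginal mean $\mu_m$ and variance $\sigma_m^2$, even though the $\delta(q, d_i)$ are correlated through their shared dependence on $q$.

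First I would apply Chebyshev's inequality to a single normalized distance: for any fixed $\epsilon > 0$,
$$\mathop{\Pr}\left[ \left| \frac{\delta(q, d_i)}{\mu_m} - 1 \right| > \epsilon \right] \le \frac{\Var[\delta(q,d)]}{\epsilon^2 \, \E[\delta(q,d)]^2},$$
which tends to $0$ by hypothesis, so $\delta(q,d_i)/\mu_m \to_p 1$. I would then promote this pointwise statement to \emph{uniform} concentration over the whole database by a union bound over the $n$ points,
$$\mathop{\Pr}\left[ \max_{1 \le i \le n} \left| \frac{\delta(q, d_i)}{\mu_m} - 1 \right| > \epsilon \right] \le n \cdot \frac{\Var[\delta(q,d)]}{\epsilon^2 \, \E[\delta(q,d)]^2} \longrightarrow 0.$$
Note the union bound requires no independence, only that each normalized distance concentrates.

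On the complementary (high-probability) event, every normalized distance lies in $(1-\epsilon, 1+\epsilon)$, which gives $\mathrm{DMIN}/\mu_m \ge 1-\epsilon$ and $\mathrm{DMAX}/\mu_m \le 1+\epsilon$ simultaneously. Hence $\mathrm{DMAX}/\mathrm{DMIN} \le (1+\epsilon)/(1-\epsilon)$ with probability tending to $1$; since $\epsilon$ is arbitrary and the ratio is always at least $1$, we obtain $\mathrm{DMAX}/\mathrm{DMIN} \to_p 1$, which is exactly instability by the Corollary to \citet{beyer1999nearest} stated above. The main subtlety I expect to handle carefully is that the normalization and the ratio manipulation both require $\mu_m > 0$ and a lower bound preventing $\mathrm{DMIN} \to 0$; the bound $\mathrm{DMIN}/\mu_m \ge 1-\epsilon$ on the good event supplies exactly this. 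A secondary caveat is that the union bound relies on $n = |D|$ being held fixed as $m \to \infty$ (the Beyer et al.\ regime); were $|D|$ allowed to grow with the dimension, I would need the relative variance to decay fast enough to dominate the factor $n = n(m)$, which is where the argument would otherwise break down.
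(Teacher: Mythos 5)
Your proof is correct and follows essentially the same route as the paper's own argument: normalize each distance by its common mean, show every normalized distance concentrates at $1$, and conclude $\mathrm{DMAX}/\mathrm{DMIN} \to_p 1$, which is instability. The only difference is mechanical: where the paper passes from per-point convergence to convergence of the database min and max via Slutsky's theorem and continuity of $\min$/$\max$, you use Chebyshev plus a union bound, which is more elementary, quantitative, and makes explicit the (correct) caveat that $|D|$ must remain fixed as $m \to \infty$ for the argument to go through.
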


\begin{theorem}[\citet{durrant2009nearest}]

For a sufficiently large database size $|D|$, if ~ $\lim_{m \to \infty} \frac{\Var\left[ \delta(q, d)\right]}{\E\left[\delta(q, d) \right]^2} \ne 0$, then the nearest neighbor problem is stable.
\label{durrantconverse}
\end{theorem}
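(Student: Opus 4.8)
The plan is to prove the statement through the equivalent characterization in the Corollary: it suffices to exhibit a single $\epsilon>0$ for which $\Pr[\mathrm{DMAX} \le (1+\epsilon)\mathrm{DMIN}]$ does \emph{not} tend to $1$, i.e. to show that the ratio $\mathrm{DMAX}/\mathrm{DMIN}$ does not converge in probability to $1$. Write $\mu_m = \E[\delta(q,d)]$ and $\sigma_m^2 = \Var[\delta(q,d)]$, and normalize each database distance as $Y_i = \delta(q,d_i)/\mu_m$, so that (for a fixed query, over the i.i.d.\ database draw) the $Y_i$ are i.i.d.\ copies of a variable $Y$ with $\E[Y]=1$ and $\Var[Y] = \sigma_m^2/\mu_m^2 = \RelVar$. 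The hypothesis that the relative variance does not vanish means $\liminf_m \Var[Y] = c$ for some $c>0$, so after passing to a subsequence I may assume $\Var[Y] \ge c$ for all $m$.

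The first real step is to turn this lower bound on the \emph{variance} of $Y$ into a lower bound on a \emph{tail probability} at a fixed relative offset from the mean. A bare variance bound does not suffice, since the variance could be produced by a vanishingly small amount of mass sitting arbitrarily far out; I therefore invoke a regularity condition controlling the relative fourth moment (bounded kurtosis), $\E[(Y-1)^4] \le K$, and apply the Paley--Zygmund inequality to the nonnegative variable $(Y-1)^2$. This yields constants $a>0$ and $p>0$ depending only on $c$ and $K$ (hence uniform in $m$) with $\Pr[\,|Y-1| \ge a\,] \ge 2p$. Consequently at least one tail carries mass $p$; passing to a further subsequence I assume the upper tail does, i.e.\ $\Pr[Y \ge 1+a] \ge p$ (the lower-tail case is symmetric).

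Next I bring in the database size. Among $n=|D|$ independent copies, $\Pr[\mathrm{DMAX}/\mu_m \ge 1+a] = 1-(1-\Pr[Y\ge 1+a])^n \ge 1-(1-p)^n \to 1$ as $n$ grows. Simultaneously, Markov's inequality gives $\Pr[Y \ge 1 + a/2] \le (1+a/2)^{-1} < 1$, so a fixed positive fraction of the mass lies below $1+a/2$ and hence $\Pr[\mathrm{DMIN}/\mu_m < 1+a/2] \to 1$ as $n$ grows. On the intersection of these two events, $\mathrm{DMAX}/\mathrm{DMIN} \ge (1+a)/(1+a/2) =: \rho > 1$. Thus there is a threshold $n_0$, \emph{independent of} $m$, such that for every $|D| \ge n_0$ the probability $\Pr[\mathrm{DMAX}/\mathrm{DMIN} \ge \rho]$ tends to $1$ along the subsequence, so for any $\epsilon < \rho-1$ the quantity $\Pr[\mathrm{DMAX} \le (1+\epsilon)\mathrm{DMIN}]$ stays bounded away from $1$. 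By the Corollary, $\mathrm{DMAX}/\mathrm{DMIN} \not\to_p 1$, i.e.\ the problem is stable, and the uniformity of $n_0$ is precisely what justifies the ``sufficiently large database size'' hypothesis.

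The step I expect to be the crux is the second one: converting $\Var \not\to 0$ into a non-vanishing tail at a \emph{fixed} relative location. This is the only place where the plain relative-variance hypothesis is genuinely insufficient, because variance can ``escape'' into a runaway tail of probability $o(1)$ whose location diverges --- exactly the configuration a finite database fails to sample, and exactly why the statement couples the converse to a large $|D|$. Closing the lower-tail case cleanly requires one further estimate: to keep $\mathrm{DMAX}$ from collapsing onto $\mathrm{DMIN}$ I would bound $\E[Y\,\mathbf{1}_{Y \ge 1-a/2}]$ from above by Cauchy--Schwarz using $\E[Y^2]=1+\RelVar$, which needs the relative variance to be bounded above as well; with that in hand the symmetric argument finishes.
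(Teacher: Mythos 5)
Your proposal is correct as far as it goes, but it takes a genuinely different route from the paper and proves a slightly strengthened-hypothesis version of the statement. The paper's own self-contained argument (the $\impliedby$ direction of Theorem~\ref{thm:multi-vec-equivalence} in Appendix~\ref{sec:multi-vec-appendix}, which subsumes Theorem~\ref{durrantconverse}) is a contrapositive: assume instability, use the large database only to place $\E[X_m]$ between the sample minimum and maximum, sandwich the normalized distance $Y_m = X_m/\E[X_m]$ between $\mathrm{DMIN}_m/\mathrm{DMAX}_m$ and $\mathrm{DMAX}_m/\mathrm{DMIN}_m$, conclude $Y_m \to_p 1$ by squeezing, and then assert that convergence in probability to a constant forces $\Var[Y_m] \to 0$. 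You argue in the forward direction: Paley--Zygmund applied to $(Y-1)^2$ converts the non-vanishing relative variance into a tail bound of mass $p$ at a fixed relative offset $a$, both uniform in $m$, and then order statistics of $n$ i.i.d.\ draws separate $\mathrm{DMAX}$ from $\mathrm{DMIN}$ by a fixed factor $\rho = (1+a)/(1+a/2) > 1$ with probability bounded away from zero once $n \ge n_0$, with $n_0$ uniform in $m$. Your route is more quantitative and makes the role of $|D|$ explicit and honest, at the cost of an extra hypothesis (bounded relative kurtosis, $\E[(Y-1)^4] \le K$) that the stated theorem does not contain.

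That extra hypothesis is the real point of comparison, and you should not treat it as a blemish: some such regularity is unavoidable, and the paper's proof needs it too but never states it. The paper's final step --- ``a random variable that converges in probability to a constant must have a variance that converges to zero'' --- is false without uniform integrability: take $Y_m = 1$ with probability $1 - 1/m$ and $Y_m = \sqrt{m}$ with probability $1/m$; then $Y_m \to_p 1$ while $\Var[Y_m] \to 1$. The same escaping-mass configuration (a value $v_m \to \infty$ carrying probability $\Theta(1/v_m^2)$) yields distance distributions whose relative variance stays bounded away from zero yet which are unstable for \emph{every} fixed database size, since no fixed number of samples ever sees the far point; so the theorem as literally stated, with $|D|$ fixed as $m \to \infty$, cannot be proved by either route without a moment condition. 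Your identification of this as the crux is the sharpest observation in the proposal. Two small repairs to your sketch: the upper bound on the relative variance that you want for the lower-tail case already follows from your kurtosis assumption, since $\Var[Y]^2 = \E[(Y-1)^2]^2 \le \E[(Y-1)^4] \le K$ gives $\E[Y^2] \le 1 + \sqrt{K}$, so no separate assumption is needed; and your subsequence extraction should be phrased via $\limsup_m \RelVar > 0$ (which is what ``does not converge to $0$'' guarantees), rather than $\liminf$.
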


Following the convention established in \citep{beyer1999nearest}, we refer to the quantity $\frac{\Var\left[ \delta(q, d)\right]}{\E\left[\delta(q, d) \right]^2}$ as the \emph{relative variance} or RelVar. In this paper, we will utilize both the convergence in probability characterization and the relative variance criterion to prove stability. 


\section{Related Work}

\subsection{Conditions for Stability}

The paper of \citet{beyer1999nearest} was a foundational work that inspired significant follow-up research in understanding the conditions for stable vector search. \citet{aggarwal2001surprising} study stability under common $\ell_p$ norm distance metrics, where they find that $\ell_1$ and fractional distance metrics are theoretically and practically favorable to $\ell_2$ distances. Furthermore, \citet{houle2010can} prove that the so-called \emph{shared-neighbor} distance function, which measures the number of near-neighbors in common between two points, is a more stable measure than absolute distance since the latter loses discriminatory power in high dimensions. In a different vein, \citet{hui2021nearest} study stability in the context of sequential data, such as genomes and sensor readings, and show that, under uniformly distributed sequences, the RelVar under edit distance goes to zero in the limit. Moreover, \citet{pestov2013k} extends the results of \citet{beyer1999nearest} from $k$-nearest neighbor \emph{search} to $k$-nearest neighbor \emph{classification} where they find that instability of vector search also implies barriers to the learnability of near-neighbor classification. 

\subsection{Multi-Vector Search}

The first retrieval setting we consider in this work is multi-vector search, where each query and document may be represented as \emph{sets} of vectors under a chosen set-similarity function. This style of vector retrieval first gained traction with the celebrated ColBERT paper \citep{khattab2020colbert}. Since the introduction of ColBERT, multiple works have studied how to accelerate of multi-vector search algorithms, including PLAID \citep{santhanam2022plaid}, DESSERT \citep{engels2023dessert}, and MUVERA \citep{jayaram2024muvera}. In addition, \cite{gollapudialpha} demonstrate that multi-vector search under the commonly-used Chamfer distance \citep{barrow1977parametric} can be integrated into the existing graph-based DiskANN index \citep{jayaram2019diskann} originally conceived for single-vector search. Furthermore, \citet{faysse2025colpali} extend the multi-vector modeling paradigm pioneered by ColBERT beyond text to multimodal vision-language retrieval. However, despite the tremendous research interest in multi-vector search, we note that no prior work has studied how the use of vector sets relates to the curse of dimensionality. We address this gap in the literature. 

\subsection{Filtered Similarity Search}

With the advent of vector databases in retrieval applications, there has also been a surge of interest in augmenting traditional near-neighbor seach indexes with the capability of filtering points according to additional metadata. For instance, an e-commerce search service might want to return relevant near-neighbor results subject to additional attributes associated to each point such as ``free shipping eligible." Consequently, a number of works have proposed techniques for augmenting traditional vector search algorithms with filtering capabilities, such as NHQ \citep{wang2023efficient}, Airship \citep{zhao2022constrained}, Filtered DiskANN \citep{gollapudi2023filtered}, Acorn \citep{patel2024acorn}, CAPS \citep{gupta2023caps}, IVF$^2$ \cite{landrum2025ivf}, and approximate window search~\citep{engels2024}. 

A unifying thread present in many of these approaches is the introduction of a \emph{penalty function} to augment the standard vector distance metric in the case where a given point does not match the query filter predicate. We adopt this penalty function model of filtered similarity search in our theoretical stability analysis. Ultimately, we find that suitably large penalty values provides a sufficient condition for vector search stability.

\subsection{Sparse Embedding Search}

We conclude this section by turning our attention to a third form of embedding-based retrieval that has recently gained significant interest, namely sparse vector search. A sparse vector is one which is typically extremely high-dimensional (at least tens of thousands of components) but with only a small number of non-zero entries. In fact, sparse vectors predate the recent advent of neural representation learning as traditional keyword retrieval via inverted indexes \citep{schutze2008introduction} can also be viewed as sparse vector search. Recently, sparse vectors have seen a renewal of interest in the form of \emph{learned sparse embeddings} as pioneered by the influential SPLADE \citep{formal2021splade, formal2021spladev2} family of models that combine the generalization power of neural representations with the interpretability properties of sparse vectors. Introduced in 2024, the Seismic inverted index-based approach \citep{bruch2024seismic} remains the current state-of-the-art for efficient and scalable sparse vector search. The key insight driving the design of Seismic is a novel observation that learned sparse embeddings, such as SPLADE, exhibit a property called concentration of importance, which states that the total $\ell_1$ mass of the sparse vector is primarily determined by a small subset of the non-zero coordinates. Inspired by this prior work, we formalize the notion of concentration of importance in this paper and draw a connection between this concept and stability. Although it turns out to not be sufficient for sparse vector search stability, we show that, under the additional assumption of \emph{overlap of importance} and some mild technical assumptions, we can prove stability.

\section{Stability of Multi-vector Search}

We now turn our attention to the stability of multi-vector search. We begin by introducing a formal mathematical model of multi-vector search including a new, analogous definition of stability in this setting. We then proceed to our main theorem of this section identifying a sufficient condition for such stability to exist. 




\subsection{Theoretical Model}

\begin{definition}[Multi-Vector Search Problem]
A multi-vector search problem is specified by a 4-tuple $(\mathcal{Q}, \mathcal{D}, \delta, \mathrm{Agg})$, where:
\begin{itemize}
    \item $\mathcal{Q}$ is a collection of \emph{query sets}. Each query $Q \in \mathcal{Q}$ is a finite set of vectors.
    \item $\mathcal{D}$ is a database of \emph{document sets}. Each document $D \in \mathcal{D}$ is a finite set of vectors.
    \item $\delta: \mathbb{R}^m \times \mathbb{R}^m \rightarrow \mathbb{R}_{\ge 0}$ is a \emph{primitive distance metric} between individual vectors.
    \item $\mathrm{Agg}$ is a \emph{set aggregation function} that computes the distance between a query set $Q$ and a document set $D$, based on the primitive distances between their constituent vectors.
\end{itemize}
For a given query $Q \in \mathcal{Q}$, the goal is to find:
$$
\arg\min_{D \in \mathcal{D}} \mathrm{Agg}(Q, D)
$$
\end{definition}

Below, we define two concrete aggregation functions that we consider in this work. 

\begin{definition}
For two finite vector sets $A$ and $B$ where $A, B \subset \mathbb{R}^m$, and a vector distance function $\delta$, the \emph{Chamfer Distance} \citep{barrow1977parametric} is defined as $$\mathrm{Chamfer}(A, B) =  \sum_{a \in A} \min_{b \in B} \delta(a, b)$$
\end{definition}

\begin{definition}
For two finite vector sets $A$ and $B$ where $A, B \subset \mathbb{R}^m$, and a vector distance function $\delta$, \emph{Average Pooling} aggregation is defined as $$\mathrm{AvgPool}(A, B) = \frac{1}{|A| \cdot |B|} \sum_{a \in A} \sum_{b \in B} \delta(a, b)$$
\end{definition}

Now, we will define a new notion of stability in the multi-vector search setting. 

\begin{definition}[Multi-Vector Search Stability]
\label{def:multi-vec-stability}
A multi-vector search problem is unstable if for all $\epsilon > 0$

$$\lim_{m \to \infty}\mathop{\Pr}\limits_{Q \sim \mathcal{Q}}\left[ \max_{D \in \mathcal{D}} \{\mathrm{Agg}(Q, D\} \le (1 + \epsilon) \min_{D \in \mathcal{D}} \{\mathrm{Agg}(Q, D)\} \right] = 1$$

\end{definition}

We can now state our first theorem of this section generalizing the results of \citep{beyer1999nearest} and \citep{durrant2009nearest} to the multi-vector setting.

\begin{theorem}
\label{thm:multi-vec-equivalence}
For sufficiently large databases, a multi-vector search instance is stable if and only if $$\lim_{m \to \infty} \frac{\Var\left[\mathrm{Agg}(Q, D)\right]}{\E\left[\mathrm{Agg}(Q, D) \right]^2} \ne 0$$
\end{theorem}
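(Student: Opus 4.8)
The plan is to prove the theorem by reduction to the single-vector results already established, namely Beyer's Theorem~\ref{beyerthm} and Durrant's converse Theorem~\ref{durrantconverse}. The central observation is that the multi-vector search problem $(\mathcal{Q}, \mathcal{D}, \delta, \mathrm{Agg})$ is itself an instance of the abstract near-neighbor framework: the ``points'' are now vector \emph{sets} rather than individual vectors, and the role of the primitive distance is played by the aggregate $\mathrm{Agg}(Q, D)$. Writing $\Delta(Q, D) := \mathrm{Agg}(Q, D)$, I would first note that the Multi-Vector Search Stability definition is \emph{verbatim} the single-vector instability definition with $\Delta$ substituted for $\delta$ and $\mathcal{Q}, \mathcal{D}$ substituted for $Q, D$. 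Thus it suffices to show that Theorems~\ref{beyerthm} and~\ref{durrantconverse} continue to hold when the primitive distance $\delta(q,d)$ is replaced by the derived quantity $\Delta(Q,D)$.

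The key step is to argue that the proofs of these two theorems never invoke the metric or vector-space structure of $\delta$; they depend only on the probabilistic structure of the collection of distance random variables. Concretely, I would verify that under the paper's standing probabilistic model the aggregate satisfies the three properties these proofs require: (i) the document sets $D \in \mathcal{D}$ are drawn i.i.d., so that conditional on a fixed query $Q$ the values $\{\mathrm{Agg}(Q, D)\}_{D \in \mathcal{D}}$ form an i.i.d.\ collection of real-valued random variables (since $\mathrm{Agg}(Q, \cdot)$ is a deterministic function of each $D$); (ii) the query $Q \sim \mathcal{Q}$ is drawn independently of the database; and (iii) the moments $\E[\mathrm{Agg}(Q,D)]$ and $\Var[\mathrm{Agg}(Q,D)]$ are finite, so that the $\RelVar$ of the aggregate is well-defined. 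Property (i) is the crucial one: because it exactly mirrors the i.i.d.\ database assumption in the single-vector setting, the Chebyshev-type concentration argument behind Beyer's theorem and the anti-concentration argument behind Durrant's converse both transfer without modification.

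With the reduction justified, the equivalence follows by assembling the two directions. Applying Beyer's Theorem~\ref{beyerthm} to $\Delta$ yields that $\RelVar[\mathrm{Agg}(Q,D)] \to 0$ implies instability; taking the contrapositive gives that stability implies $\lim_{m\to\infty}\RelVar[\mathrm{Agg}(Q,D)] \ne 0$. Conversely, applying Durrant's Theorem~\ref{durrantconverse} to $\Delta$ gives that, for a sufficiently large database $|\mathcal{D}|$, the condition $\lim_{m\to\infty}\RelVar[\mathrm{Agg}(Q,D)] \ne 0$ implies stability. Combining the two directions establishes the claimed ``iff'' (carrying over the same ``sufficiently large database'' caveat inherited from Theorem~\ref{durrantconverse}).

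I expect the main obstacle to be one of careful bookkeeping rather than a new idea: one must confirm that conditioning on the query $Q$ genuinely renders the aggregate distances across documents an i.i.d.\ collection, and that $\mathrm{Agg}$ inherits finite second moments from $\delta$ for the aggregations of interest, so that $\RelVar$ is well-defined and the borrowed concentration bounds apply. Once the i.i.d.\ structure and moment finiteness are in place, no property of $\mathrm{Agg}$ beyond its being a real-valued measurable function of $(Q,D)$ is required, which is precisely why the generalization goes through for a \emph{general} aggregation rather than for a specific choice such as Chamfer distance or average pooling.
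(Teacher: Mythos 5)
Your proposal is correct and takes essentially the same route as the paper: the paper's proof likewise observes that $\mathrm{Agg}(Q,D)$ is simply a non-negative scalar distance random variable under the same probabilistic model, so the multi-vector stability definition coincides with the single-vector one and the arguments behind Theorems~\ref{beyerthm} and~\ref{durrantconverse} transfer verbatim (the appendix just replays those two arguments with $X_m = \mathrm{Agg}(Q,D)$ for completeness). The ``sufficiently large database'' caveat you carry over from Theorem~\ref{durrantconverse} is inherited by the paper's statement as well.
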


We defer the proof of this theorem to Appendix \ref{sec:multi-vec-appendix}. With this extension of stability to the multi-vector setting in hand, we will now define a procedure for constructing a single-vector search instance from a multi-vector problem.

\begin{definition}[Induced Single-Vector Search Problem]
Given a multi-vector search instance $(\mathcal{Q}, \mathcal{D}, \delta, \mathrm{Agg})$, we define the \emph{induced} single-vector search problem $(Q^\prime, D^\prime, \delta)$ as follows

\begin{itemize}
\item $Q^\prime = \bigcup_{Q_i \in \mathcal{Q}} Q_i$ is the set of unique query vectors, across all multi-vector query sets.

\item $D^\prime = \bigcup_{D_i \in \mathcal{D}} D_i$ is the set of unique document components, across all documents.

\item The distance metric $\delta$ is the primitive distance function of $\mathrm{Agg}$.
\end{itemize}
\end{definition}

\subsection{Theoretical Results}

From the previous definitions, we can now establish conditions for the stability of multi-vector search under Chamfer distance via the stability of the induced single vector search instance. The key idea behind our argument is to leverage the structure of the Chamfer distance, which, in essence, solves multiple scaled-down near neighbor search problems.

Before stating our main theorem of this section, we will need to define one additional property we call \emph{strong stability}. 

\begin{definition}[Strong Stability]
A near-neighbor search problem $(Q, D, \delta)$ is $c$-strongly stable for some constant $c > 1$ if $\frac{\mathrm{DMAX}}{\mathrm{DMIN}} > c + \Omega(1)$ almost surely.  
\end{definition}

Intuitively, strong stability asserts that a near neighbor problem is not only stable but also stable with a non-decreasing gap. We will need this stronger property when analyzing the stability of multi-vector search because, even if the induced single-vector search problem is stable, the organization of document vectors into sets might contract the maximum distance gap seen in the single vector search problem. Strong stability ensures that, as long as the vector sets obey a certain mild non-degeneracy condition, we retain stability even if the stability gap contracts.   

We will now proceed to our main results on the stability of multi-vector search. We will begin by proving a lemma establishing stability in the case where the query sets are singletons (each containing only one vector) while the document sets may be larger in size. Then, we will use this result to prove stability in the general case of both query and document sets with multiple elements. 

\begin{lemma}[Chamfer Stability with Singleton Query Sets]
\label{lem:limit_transfer}
Let \((\mathcal{Q}, \mathcal{D}, \delta, \mathrm{Chamfer})\) be a multi-vector search problem where $|Q_i| = 1$ for each query set $Q_i \in \mathcal{Q}$. Let \((Q', D', \delta)\) denote the corresponding induced single-vector instance . 
Let \(q \sim Q'\) and let \(\{D_1, \dots, D_n\}\) be the document sets in \(\mathcal{D}\). Assume the induced single-vector search problem is $c$-strongly stable. 

If the \emph{non-degeneracy condition}
$$c\cdot \max_k \min_{d \in D_k} \delta(q, d)
\ge \max_k \max_{d \in D_k} \delta(q, d)$$
holds, meaning that there exists at least one document set whose vectors are all sufficiently far from a given query vector, then the multi-vector search instance is stable.  

\end{lemma}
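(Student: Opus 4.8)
The plan is to exploit the fact that a singleton query collapses the Chamfer distance to an ordinary within-set nearest-neighbor distance, and then transfer the multiplicative gap guaranteed by strong stability of the induced problem to the aggregated distances. Concretely, since each query set is a singleton $\{q\}$, we have $\mathrm{Chamfer}(\{q\}, D_k) = \min_{d \in D_k}\delta(q,d)$, so the quantities governing multi-vector stability are $\mathrm{DMIN}^{\mathrm{MV}} = \min_k \min_{d\in D_k}\delta(q,d)$ and $\mathrm{DMAX}^{\mathrm{MV}} = \max_k \min_{d\in D_k}\delta(q,d)$. I would then invoke the ratio characterization of instability (the multi-vector analogue that follows from the same generalization underlying Theorem \ref{thm:multi-vec-equivalence}): it suffices to show that $\mathrm{DMAX}^{\mathrm{MV}}/\mathrm{DMIN}^{\mathrm{MV}}$ does not converge in probability to $1$, equivalently that it stays above some fixed constant strictly larger than $1$ with non-vanishing probability.

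The key observations are two identities tying the aggregated quantities to the induced single-vector instance. First, minimizing a within-set minimum over all sets is the same as minimizing over the union $D' = \bigcup_k D_k$, so $\mathrm{DMIN}^{\mathrm{MV}} = \mathrm{DMIN}'$ exactly, where $\mathrm{DMIN}'$ is the single-vector nearest-neighbor distance. Second, the non-degeneracy hypothesis is precisely the statement $c\cdot\mathrm{DMAX}^{\mathrm{MV}} \ge \mathrm{DMAX}'$, i.e.\ $\mathrm{DMAX}^{\mathrm{MV}} \ge \mathrm{DMAX}'/c$, where $\mathrm{DMAX}' = \max_{d\in D'}\delta(q,d)$; intuitively this guarantees that the Chamfer-farthest document set (whose vectors are all far from $q$) sits within a factor $c$ of the globally farthest vector, so the set-min aggregation cannot collapse $\mathrm{DMAX}^{\mathrm{MV}}$ down toward $\mathrm{DMIN}^{\mathrm{MV}}$. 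Chaining these with the $c$-strong stability of the induced problem, $\mathrm{DMAX}'/\mathrm{DMIN}' > c$ almost surely, yields
\[
\frac{\mathrm{DMAX}^{\mathrm{MV}}}{\mathrm{DMIN}^{\mathrm{MV}}} \;=\; \frac{\mathrm{DMAX}^{\mathrm{MV}}}{\mathrm{DMIN}'} \;\ge\; \frac{1}{c}\cdot\frac{\mathrm{DMAX}'}{\mathrm{DMIN}'} \;>\; 1
\]
almost surely, which is the qualitative behavior we want.

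The main obstacle, and the step I would spend the most care on, is upgrading this bound from a pointwise ``$>1$'' to a gap bounded away from $1$ by a fixed constant uniform in $m$ — which is exactly what the ratio characterization needs in order to rule out convergence to $1$. The difficulty is a genuine tension: the aggregation contracts the single-vector gap by the factor $1/c$ supplied by non-degeneracy, while strong stability furnishes a margin of only $c$, so the two effects nominally cancel and the naive chain certifies only strict positivity of the excess over $1$. I would resolve this by using the multiplicative margin of strong stability \emph{quantitatively} rather than qualitatively: strong stability is strictly stronger than ordinary stability precisely because it provides a fixed $c>1$ uniform in $m$, and I would track this constant through the chain to certify that the contracted ratio exceeds a fixed $1+\epsilon_0$ (for instance by arranging the non-degeneracy factor to be strictly dominated by the stability margin, so that $\mathrm{DMAX}^{\mathrm{MV}}/\mathrm{DMIN}^{\mathrm{MV}}$ exceeds the fixed constant $c/c' > 1$). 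Concretely, I would exhibit a fixed $\epsilon_0>0$ for which $\Pr_{q}\!\left[\mathrm{DMAX}^{\mathrm{MV}} > (1+\epsilon_0)\,\mathrm{DMIN}^{\mathrm{MV}}\right]$ stays bounded away from $0$ as $m\to\infty$, and then conclude stability via the multi-vector ratio characterization. The remaining points — that $\mathrm{DMIN}^{\mathrm{MV}}>0$ so the ratio is well defined, measurability of the maxima and minima, and uniformity of the count over the finitely many document sets — are routine bookkeeping that I would defer to the end.
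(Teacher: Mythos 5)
Your proposal follows the paper's own proof step for step through its substantive content: the collapse of Chamfer to a within-set minimum for singleton queries, the identity $\mathrm{DMIN}^{\mathrm{MV}} = \mathrm{DMIN}'$ (minimizing the set-minima over all sets equals minimizing over the union $D'$), the reading of non-degeneracy as $c \cdot \mathrm{DMAX}^{\mathrm{MV}} \ge \mathrm{DMAX}'$, and the chain
$$\frac{\mathrm{DMAX}^{\mathrm{MV}}}{\mathrm{DMIN}^{\mathrm{MV}}} \;=\; \frac{\mathrm{DMAX}^{\mathrm{MV}}}{\mathrm{DMIN}'} \;\ge\; \frac{1}{c}\cdot\frac{\mathrm{DMAX}'}{\mathrm{DMIN}'} \;>\; 1 \quad \text{almost surely}$$
are exactly the paper's argument (the paper swaps which quantities carry the prime, but the substance is identical). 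Where you diverge is the final step, and your caution there is well placed: the paper concludes directly that ``since the ratio is $>1$ almost surely, it does not converge in probability to $1$,'' which is precisely the inference you flag as invalid without a uniform gap --- a ratio equal to $1 + 1/m$ almost surely is strictly greater than $1$ for every $m$ yet converges in probability to $1$. The paper performs none of the quantitative constant-tracking you propose; it stops at the qualitative bound. Note, however, that your proposed repair --- arranging the non-degeneracy factor $c'$ to be strictly dominated by the strong-stability constant $c$, so the chain certifies a fixed ratio $c/c' > 1$ --- is a strengthening of the lemma's stated hypotheses, since as written both conditions use the same constant $c$ and the contraction and the margin cancel exactly. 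So your proposal is the paper's argument plus an honest acknowledgment of a real gap in its concluding step: under the hypotheses as stated, the chain certifies only a strict but possibly vanishing excess over $1$, and closing this rigorously requires either your strengthened hypothesis or a strong-stability margin strictly exceeding the non-degeneracy factor.
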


\begin{theorem}[Multi-Vector Stability with Chamfer Distance]
\label{thm:chamfer_stability}
Let \((\mathcal{Q}, \mathcal{D}, \delta, \mathrm{Chamfer})\) be a multi-vector search problem where $|Q| = k$ for some constant $k \ge 1$ for all $Q \in \mathcal{Q}$. Let $A_i = \min_{d \in D} \delta(q_i, d)$ be the nearest-neighbor distance for an individual query vector $q_i \in Q$. The problem is stable if: 
\begin{enumerate}
\item[(a)] The induced single-vector search instance is $c$-strongly stable. 

\item[(b)] The document sets satisfy non-degeneracy as defined in Lemma \ref{lem:limit_transfer}. 

\item[(c)] $\sum_{i < j} \Cov(A_i, A_j) \ge 0$.
\end{enumerate}

\end{theorem}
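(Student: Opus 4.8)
The plan is to reduce the claim to the relative-variance criterion of Theorem~\ref{thm:multi-vec-equivalence} and then exploit the additive structure of the Chamfer distance. Since every query set has exactly $k$ vectors $q_1, \dots, q_k$, the Chamfer distance decomposes as $\mathrm{Chamfer}(Q, D) = \sum_{i=1}^k A_i$, where $A_i = \min_{d \in D} \delta(q_i, d)$ is exactly the singleton-query Chamfer distance associated with $q_i$. By Theorem~\ref{thm:multi-vec-equivalence}, it suffices to show that the relative variance of this sum does not vanish as $m \to \infty$. First I would write out the variance-of-a-sum identity
\[
\Var\left[\sum_{i=1}^k A_i\right] = \sum_{i=1}^k \Var[A_i] + 2\sum_{i<j}\Cov(A_i, A_j),
\]
which holds regardless of the dependence among the $A_i$, and divide by $\E\big[\sum_i A_i\big]^2 = \big(\sum_i \E[A_i]\big)^2$ to obtain an exact expression for the Chamfer relative variance.

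Next I would apply hypothesis~(3). Because $\sum_{i<j}\Cov(A_i, A_j) \ge 0$, the cross terms can only help, giving
\[
\frac{\Var\big[\sum_i A_i\big]}{\big(\sum_i \E[A_i]\big)^2} \;\ge\; \frac{\sum_i \Var[A_i]}{\big(\sum_i \E[A_i]\big)^2}.
\]
The key step is then to control the right-hand side using the singleton result. Writing $\mu_i = \E[A_i]$ and $v_i = \Var[A_i]$, hypotheses~(1) and~(2) let me invoke Lemma~\ref{lem:limit_transfer} for each singleton query $\{q_i\}$: each such single-vector instance is stable, so by Theorem~\ref{thm:multi-vec-equivalence} its relative variance $v_i/\mu_i^2$ stays bounded away from $0$ for large $m$, say $v_i/\mu_i^2 \ge \rho > 0$. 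Substituting $v_i \ge \rho\,\mu_i^2$ and applying the Cauchy--Schwarz (power-mean) bound $\sum_i \mu_i^2 \ge \frac{1}{k}\big(\sum_i \mu_i\big)^2$ yields
\[
\frac{\sum_i \Var[A_i]}{\big(\sum_i \E[A_i]\big)^2} \;\ge\; \rho\,\frac{\sum_i \mu_i^2}{\big(\sum_i \mu_i\big)^2} \;\ge\; \frac{\rho}{k}.
\]
Since $k$ is a fixed constant, the Chamfer relative variance is bounded below by $\rho/k > 0$ for all large $m$, hence cannot converge to $0$, and stability follows from Theorem~\ref{thm:multi-vec-equivalence}.

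The main obstacle I anticipate is establishing the \emph{uniform} lower bound $\rho$ on the singleton relative variances $v_i/\mu_i^2$, uniform both across the indices $i = 1, \dots, k$ and along the sequence $m \to \infty$. If the query vectors $q_1, \dots, q_k$ share a common marginal (drawn from the induced query law $Q'$), then all $A_i$ have identical marginal distributions, so $v_i/\mu_i^2$ is a single quantity furnished directly by Lemma~\ref{lem:limit_transfer} and uniformity across $i$ is automatic. Otherwise I would verify that the bound produced by Lemma~\ref{lem:limit_transfer} depends only on the strong-stability constant $c$ and the non-degeneracy constant, both common to every singleton instance, so that one $\rho > 0$ works for all $i$. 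A secondary point to handle with care is the limiting argument itself: the lemma guarantees a \emph{nonzero limit} rather than the existence of a limit, so I would phrase the conclusion as a $\liminf$ remaining bounded away from zero, which is precisely what the equivalence criterion of Theorem~\ref{thm:multi-vec-equivalence} requires.
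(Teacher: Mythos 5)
Your proposal is correct and follows essentially the same route as the paper's proof: decompose $\mathrm{Chamfer}(Q,D)=\sum_{i=1}^k A_i$, discard the cross terms using hypothesis~(3), bound the denominator via $\big(\sum_i \E[A_i]\big)^2 \le k\sum_i \E[A_i]^2$, and invoke the singleton-query stability of Lemma~\ref{lem:limit_transfer} to bound the result away from zero. The only cosmetic difference is that the paper packages your substitution $v_i \ge \rho\,\mu_i^2$ as an application of the mediant inequality (Lemma~\ref{lem:dan-favorite}, giving a $\min_i v_i/\mu_i^2$ bound), and your closing remark about phrasing the conclusion as a $\liminf$ is, if anything, more careful than the paper's use of a bare limit.
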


We provide the proofs of Lemma \ref{lem:limit_transfer} and Theorem \ref{thm:chamfer_stability} in Appendix \ref{sec:multi-vec-appendix}. In Appendix \ref{sec:empirical-chamfer-appendix}, we also verify that all three theorem assumptions comfortably hold true in practice. 

\subsection{Stability of Synthetic Embeddings}

To experimentally validate Theorem \ref{thm:chamfer_stability}, we generate a synthetic family of multi-vector search instances where we can take the dimensionality to be arbitrarily large. In Figure \ref{fig:multivec_stability} we report the results of our experiment with 1000 document sets and 100 query sets each with 4 vectors per set. Our goal in this construction is to generate a synthetic family of vectors sets that satisfy the strong stability, non-degeneracy condition, and non-negative covariances of the theorem while showcasing a case where average pooling exhibits instability while Chamfer aggregation does not. To that end, we generate query and document vectors sampled from a standard multivariate Gaussian distribution and add additional Gaussian noise to each vector. Crucially, for each vector $v$ we add to a document set, we also add the antipodal vector $-v$ to the set. This construction forces the average pooling aggregation with cosine distance to essentially ``cancel out" the signal and forcing all set distances to be essentially equivalent. However, Chamfer distance does not suffer from this issue since it will simply select the nearest neighbor document vector for each query vector. Thus, in Figure \ref{fig:multivec_stability}, we present a counterexample where, with all other theorem conditions being satisfied, Chamfer distance exhibits overwhelming stability while averaging results in instability.

\begin{figure}[H]
\centering
\includegraphics[width=1.05\linewidth]{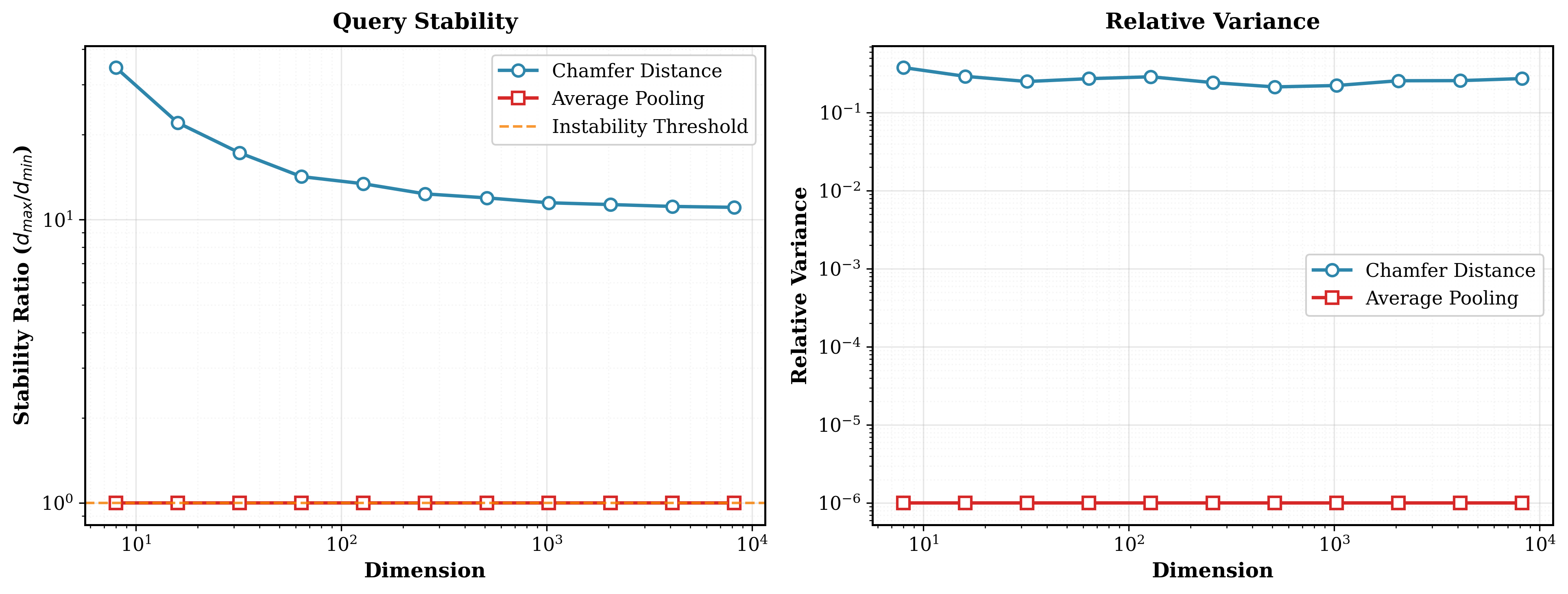}
\caption{We observe that multi-vector search with Chamfer distance maintains a non-decaying stability ratio and relative variance at high dimensions while average pooling, under the same vector set construction, exhibits instability.}
\label{fig:multivec_stability}
\end{figure}

\section{Stability of Filtered Vector Search}

We now move to the second major vector-based retrieval setting we analyze in this paper, namely filtered similarity search. As in the previous section, we will begin by defining the formal mathematical model for filtered vector search we consider and then introduce our main theorem. 







\subsection{Theoretical Model}

We begin with a formal definition of the filtered (or hybrid) near neighbor search problem which combines traditional vector similarity retrieval with attribute matching. 

\begin{definition}[Filtered Near-Neighbor Search] A \emph{filtered near-neighbor search problem} is a 6-tuple $(\mathcal{Q}, \mathcal{D}, \delta, A_Q, A_D, f)$ where:

\begin{itemize}
    \item $\mathcal{Q} = \{(q_1, A_{q_1}), (q_2, A_{q_2}), \dots \}$ is a set of query tuples where each $q_i \in \mathbb{R}^m$ and each $A_{q_i} \subseteq A_Q$ is a set of \emph{query attributes}.
    
    \item $\mathcal{D} = \{(d_1, A_{d_1}), (d_2, A_{d_2}), \dots \}$ is the set of database tuples where each $d_i \in \mathbb{R}^m$ and each $A_{d_i} \subseteq A_D$ is a set of \emph{document attributes}.

    \item $\delta: \mathbb{R}^m \times \mathbb{R}^m \rightarrow \mathbb{R}_{\ge 0}$ is a standard vector distance metric.

    \item $f\colon A_Q \times A_D \to \{0, 1\}$ is a \emph{filter function} that outputs 1 if the query and document attributes match and 0 otherwise. 
\end{itemize}

For a given query $(q, A_{q})$, we compute $$\argmin_{(d, A_d) \in \mathcal{D}} \delta(q, d) \quad \text{subject to } f(A_q, A_d) = 1$$

\end{definition}

\begin{remark}
This definition of filtered near neighbor search places no restriction on how the filter function $f$ operates. For instance, if the query and document attributes are discrete, then a natural definition of $f$ might be to output 1 if $A_q \subseteq A_d$ and 0 otherwise. Furthermore, our definition extends to other filtered search settings. In the \emph{approximate window search} problem \citep{engels2024}, each document is associated with a numeric label, and each query aims to find the nearest documents within arbitrary label ranges. In this case, we can model $A_q \subset \mathbb{R}$ as a continuous range of real numbers and $A_d \in \mathbb{R}$ as a scalar value. The filter function $f$ in this setting would output 1 if $A_d \in A_q$ and 0 otherwise.
\end{remark}

In this paper, we will focus our analysis on \emph{penalty-based} filtered near-neighbor search algorithms that augment the standard geometric vector distance metric with an additive penalty for points that do not satisfy the query filter. These functions are also referred to as \emph{fusion} distances in the literature \citep{wang2023efficient}. We will define the $\emph{penalized}$ distance function as $\delta^{\prime}(q, d) = \delta(q, d) + \alpha (1 - f(A_q, A_d))$ so that the distance metric incurs an additive scalar penalty of $\alpha$ if and only if $f(A_q, A_d) = 0$ meaning the query and document attributes do not match. 

We note that this formal model of fused distance functions captures nearly all filtered vector search algorithms in the literature. Some approaches, such as NHQ \citep{wang2023efficient} and CAPS \citep{gupta2023caps} set the penalty value via heuristics or parameter tuning. On the other hand, many approaches in the literature, such as Acorn \citep{patel2024acorn}, Filtered DiskANN \citep{gollapudi2023filtered}, and window search~\cite{engels2024} never consider points that violate the query filter predicate throughout the entire search process. These algorithms can also be modeled by our framework by setting $\alpha$ to be infinite. Thus, we emphasize that our formalization of filtered vector search closely aligns with nearly all algorithmic approaches in the literature.

\subsection{Theoretical Results}

We now present our main theorem of this section, which states that we can achieve stability in the filtered search setting by choosing a sufficiently large distance penalty value. In this theoretical analysis, we assume that the attributes are drawn from some distribution such that the filter mismatch probability is at most some constant $p_{\max}$ where $0<p_{\max}<1$.

\begin{theorem}[Stability of Filtered Search]
\label{thm:filtered-stability}
Consider the filtered near neighbor search setting with penalized distance functions. Suppose that the filter mismatch probability for any query-document pair is bounded by $0 < p_{\max} < 1$ for some constant $p_{\max}$. If the penalty value satisfies $\alpha > \frac{2 \Delta}{ 1 - p_{\max}}$ where $\Delta = \max_{q, d} \delta(q, d)$, then the filtered search problem is stable.
\end{theorem}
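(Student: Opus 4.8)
The plan is to establish stability through the relative–variance criterion of Theorem~\ref{durrantconverse}: it suffices to show that $\RelVar = \Var[\delta'(q,d)]/\E[\delta'(q,d)]^2$ stays bounded away from $0$ as $m \to \infty$, where $\delta'(q,d) = \delta(q,d) + \alpha\,(1 - f(A_q,A_d))$ is the penalized distance. First I would introduce the convenient decomposition $\delta'(q,d) = B + \alpha Z$, where $B := \delta(q,d) \in [0,\Delta]$ is the base distance and $Z := 1 - f(A_q,A_d)$ is the filter–mismatch indicator, a Bernoulli random variable with $\Pr[Z=1] = p \le p_{\max}$. The crucial structural observation is that the penalty splits the support of $\delta'$ into two disjoint bands --- matches land in $[0,\Delta]$ and mismatches in $[\alpha,\alpha+\Delta]$ --- and since $\alpha > 2\Delta$ these bands are well separated. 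A mixture across two well-separated bands has large spread relative to its mean, which is exactly what $\RelVar$ measures.

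To make this quantitative I would lower-bound the variance by its between-band component via the law of total variance, $\Var[\delta'] = \E[\Var(\delta'\mid Z)] + \Var[\E(\delta'\mid Z)] \ge \Var[\E(\delta'\mid Z)] = p(1-p)\,(\mu_1 + \alpha - \mu_0)^2$, where $\mu_0 = \E[B \mid Z=0]$ and $\mu_1 = \E[B\mid Z=1]$ both lie in $[0,\Delta]$. Since $\mu_0 - \mu_1 \le \Delta < \alpha$, this gives $\Var[\delta'] \ge p(1-p)\,(\alpha-\Delta)^2$. For the denominator I would use the trivial bound $\E[\delta'] = \E[B] + \alpha p \le \Delta + \alpha p$. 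Now the penalty hypothesis does the decisive work: from $p \le p_{\max}$ we have $\alpha > \frac{2\Delta}{1-p_{\max}} \ge \frac{2\Delta}{1-p}$, hence $\alpha(1-p) > 2\Delta$, which rearranges to $\alpha - \Delta > \Delta + \alpha p \ge \E[\delta']$. Substituting, $\RelVar \ge p(1-p)\,\big(\tfrac{\alpha-\Delta}{\E[\delta']}\big)^2 > p(1-p)$.

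Finally I would conclude by noting that the mismatch probability $p$ is a property of the attribute distribution and the filter $f$ alone, and is therefore a constant independent of the embedding dimension $m$; as long as filtering is non-degenerate ($p > 0$, so mismatches genuinely occur) together with the stated $p \le p_{\max} < 1$, the bound $\RelVar > p(1-p)$ is a fixed positive constant for every $m$. Hence $\lim_{m\to\infty}\RelVar \ge p(1-p) \ne 0$, and Theorem~\ref{durrantconverse} yields stability. I expect the main obstacle to be handling the dependence between the base distance $B$ and the mismatch indicator $Z$: written naively as $\Var[B] + \alpha^2\Var[Z] + 2\alpha\Cov(B,Z)$, the variance carries a covariance term of indeterminate sign that could in principle erode the bound. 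Routing the argument through the law-of-total-variance decomposition is what cleanly sidesteps this, since the between-band term is manifestly non-negative and already captures the separation induced by $\alpha$. A secondary point worth making explicit is the implicit non-degeneracy $p > 0$, without which (filters always matching) the problem collapses to the possibly-unstable unfiltered search and no penalty can help.
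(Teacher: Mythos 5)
Your proof is correct, but it takes a genuinely different route from the paper's. The paper expands the variance directly, $\Var[\delta'] = \Var[\delta] + \alpha^2\Var[1_c] + 2\alpha\Cov(\delta,1_c)$, drops the non-negative $\Var[\delta]$ term, and then controls the potentially harmful covariance by the crude bound $\Cov(\delta,1_c) \ge -p_c\,\E[\delta]$ (which follows from $\E[\delta 1_c]\ge 0$); the threshold $\alpha > 2\Delta/(1-p_{\max})$ emerges as exactly the condition under which the positive term $\alpha^2 p_c(1-p_c)/\Delta^2$ beats the covariance loss $2\alpha p_c/\Delta$. You instead condition on the mismatch indicator and keep only the between-group variance $p(1-p)(\mu_1+\alpha-\mu_0)^2 \ge p(1-p)(\alpha-\Delta)^2$, which absorbs the covariance issue entirely --- precisely the pitfall you anticipated at the end. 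Your route buys two things: (i) a clean, dimension-free lower bound $\RelVar > p(1-p)$, whereas the paper's final bound $\bigl(\alpha^2 p_c(1-p_c)/\Delta^2 - 2\alpha p_c/\Delta\bigr)/\bigl(1+\alpha p_c/\E[\delta]\bigr)^2$ still carries $\E[\delta]$ in its denominator, so positivity at each finite $m$ does not by itself yield a limit bounded away from zero --- your uniform bound closes that small gap; and (ii) an explicit statement of the non-degeneracy requirement $p>0$, which both the theorem statement and the paper's proof leave implicit ($p_{\max}$ is only an upper bound, yet both arguments need mismatches to actually occur). What the paper's approach buys in exchange is that it exposes the covariance term explicitly, which is what motivates the experimental construction in the paper where distance and filter mismatch are engineered to be negatively correlated so as to stress the bound. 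One cosmetic remark: to match the paper's stability criterion you should state the conclusion as $\liminf_{m\to\infty}\RelVar \ge p(1-p) > 0$, since the limit of the relative variance need not exist.
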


We defer the full proof of this theorem to Appendix \ref{sec:filtered-appendix}. We note that Theorem \ref{thm:filtered-stability} holds regardless of whether the underlying vector search problem (without constraints) is stable or unstable since our proof handles both cases directly. The latter case is particularly interesting as it implies that, under the assumptions of Theorem \ref{thm:filtered-stability}, the addition of filters can transform an \textbf{unstable} vector search problem into a \textbf{stable} one. 


\subsection{Experiments}

In this section, we empirically validate Theorem \ref{thm:filtered-stability} on a family synthetic vector search problems with 10,000 documents and 100 queries and varying dimensionality. We unit normalize all vectors so that the constant $\Delta \le 2$ and set the parameter $p_{\max}$ to be $0.5$ which satisfies the conditions of the theorem. Thus, as predicted by our theory, a sufficient condition for stability based on these parameters is setting the penalty value $\alpha$ to be greater than $8$. For this experiment, we start with a base vector search problem that is unstable and we consider three different choices for the penalty value: in the first case we take the penalty to be 8.1, namely slightly above the minimum threshold for stability; secondly, we set the penalty value to be smaller than the threshold suggested by Theorem \ref{thm:filtered-stability} and have it decay with the dimensionality; finally, we consider the case of a penalty value of 0. As suggested by the proof of Theorem \ref{thm:filtered-stability}, we want to engineer filter mismatches such that there is a negative correlation between the distance function and attribute matches. We simulate this phenomenon by setting filter matches and mismatches based on each document's average distance to all queries; documents that are close to queries on average become filter mismatches while documents further away are matches.

As predicted by Theorem \ref{thm:filtered-stability}, we see that the large penalty value above the theorem's threshold achieves stability. Moreover, the small, decreasing penalty value causes the problem to not satisfy the conditions of the theorem demonstrating that there exist non-zero penalty values that result in instability. Furthermore, when we set the penalty value to 0, the problem reduces to standard near-neighbor search where we observe instability by construction. Thus, this experiment validates that with a sufficiently high penalty value as prescribed by Theorem \ref{thm:filtered-stability}, an unstable near neighbor search problem becomes stable with the addition of filters. 

\begin{figure}[H]
\centering
\includegraphics[width=1.05\linewidth]{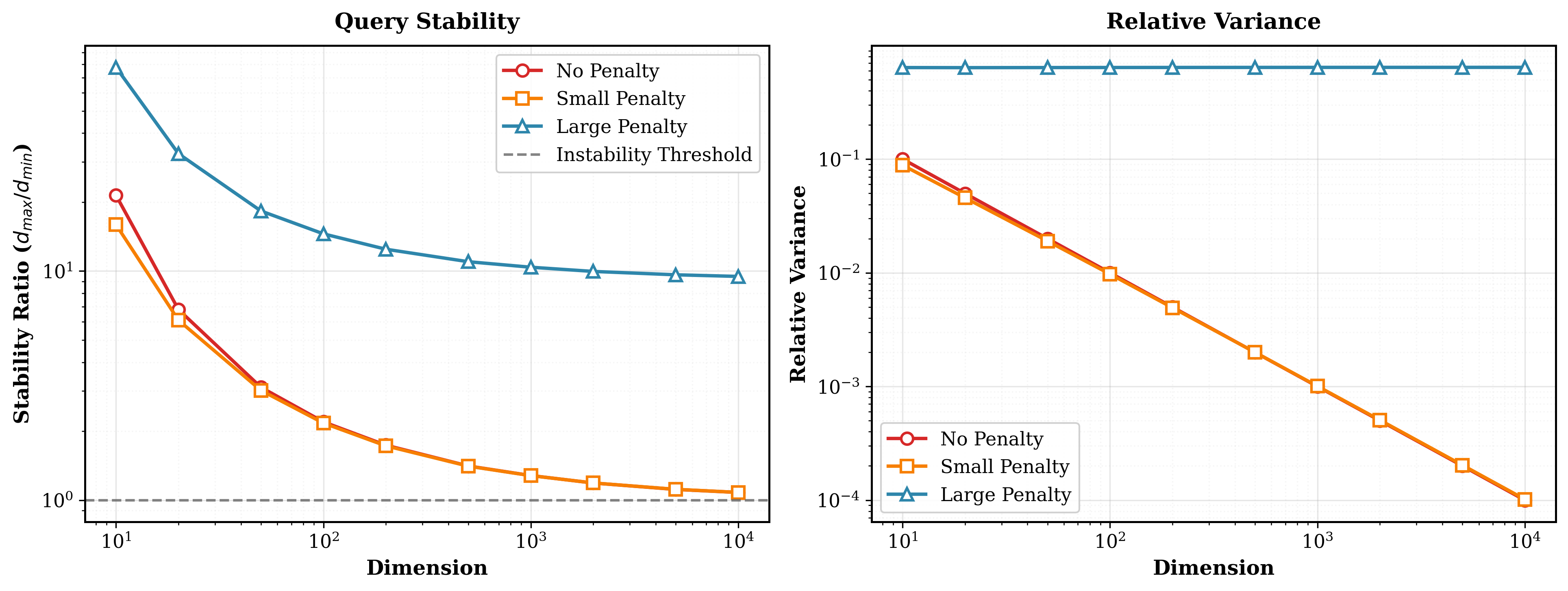}
\caption{We construct a filtered search instance that with a sufficiently large mismatch penalty exhibits stability but is unstable with a sufficiently small penalty and with no penalty at all.}
\label{fig:constraint_stability}
\end{figure}

\section{Stability of Sparse Vector Search}

We now turn our attention to the final vector search setting we study in this work: sparse embedding retrieval. 





\subsection{Theoretical Model}
In our setup, a sparse vector $d \in \mathbb{R}^m$ is characterized by its support $S_d \coloneqq \{i \in [m] \mid d_i \neq 0\}$. While in many sparse retrieval settings $|S_d| \ll m$ (often $|S_d| = O(\log m))$, for full generality we impose no growth rate on $|S_d|$ and only rely on the concentration and overlap of importance to establish stability. In other words, we do not restrict ourselves to any particular definition of sparsity. Our main theorem is more general in nature and may apply to certain dense vector settings as well. Nevertheless, the primary application of this theorem remains in the sparse setting. 

\begin{definition}[Concentration of Importance (CoI)]\label{coi-def}
Let $d\in \mathbb{R}_{\geq 0}^m$ and let $d_{(1)} \geq d_{(2)} \geq \dots \geq d_{(m)}$ be its coordinates sorted by value and denote by $||d||_p$ its $\ell_p$ norm. For a constant $\kappa \ll m$ where $\kappa $ independent of $m$ (i.e., not growing with the dimension), define 
\[
C_d(\kappa) \coloneqq \frac{\sum_{i=1}^{\kappa}|d_{(i)}|^p}{||d||_p^p} \in [0,1]
\]

Let $Q,D \subset \mathbb{R}_{\geq 0}^{m}$. We say that the pair $(Q,D)$ satisfies concentration of importance with parameters $(\kappa, \alpha, R, \rho)$ if
\begin{enumerate}
    \item[(a)] $C_q(\kappa) \geq \alpha$ for all $q \in Q$.
    \item[(b)] $\mathop{\Pr}_{d\sim D}\left[C_d(R\kappa) \ge \alpha \right] \ge \rho$.
    \item[(c)] $R\kappa \ll m$, where $R \geq 1$ is fixed and independent of $m$.
\end{enumerate}
\end{definition}

\begin{remark}
Restricting to nonnegative vectors entails no loss of generality. We provide a full proof of this claim in Lemma ~\ref{bi-lipschitz-lemma}. We state our results for non-negative vectors for notational simplicity. 
\end{remark}


\begin{definition}[Overlap of Importance] \label{overlap-def}
Fix $p\ge 1$ and $0 < \kappa \ll m$. For $d\in\R_{\ge 0}^m$, let $T_d$ be the indices of the
top-$\kappa$ coordinates of $d$ by magnitude. We say a $(Q,D,\delta)$ sparse vector
search problem exhibits \emph{overlap of importance} if there exist constants
$\gamma>0$ and $\pi \in(0,1)$ such that, for $q\in Q$ and $d\in D$ with
$\|q\|_p=\|d\|_p=1$,
\[
\Pr\!\Bigg[\, \sum_{i\in T_q\cap T_d} \min\{\,q_i^{\,p},\, d_i^{\,p}\,\}
\ \ge\ \gamma \,\Bigg] \ \ge\ \pi.
\]
\end{definition}

Intuitively, the overlap condition necessitates that with non-vanishing probability, the two vectors share a set of coordinates that cumulatively carry at least $\gamma$ in both. 

\subsection{Theoretical Results}

Given the sparse vector search model introduced in the previous section, we now state our main result via concentration and overlap of importance. 

\begin{restatable}{theorem}{MainSparseTheorem}\label{thm:stability-coi-overlap}

Let $(Q, D, ||\mathord{\cdot}||_p)$ be a near neighbor search problem where $Q, D \subset \mathbb{R}_{\geq  0}^{m}$ with $||q||_p = ||d||_p = 1$ for all $q \in Q, d \in D$  and $p \geq 1.$ Assume that

\begin{enumerate}

\item[(a)] $(Q,D)$ satisfy $\text{CoI}(\kappa_q, \alpha, R, \rho)$ as stated in Definition~\ref{coi-def}.
\item[(b)] $(Q,D)$ satisfy overlap of importance with parameters $(\gamma, \pi)$ as stated in Definition~\ref{overlap-def}.
\item[(c)] There exists a constant $\tau$ such that $\Pr[i \in T_x] \leq \frac{\tau}{m}$ for any $i \in [m]$ and $x \in \{d, q\}$.

\end{enumerate}

Under assumptions $(a)$--$(c)$ above, define $X \coloneqq (2 - 2\gamma)^{1/p}$ and $Y \coloneqq \frac{\rho 2^{1/p}}{1-\pi} \left(\alpha^{1/p} - (1-\alpha)^{1/p}\right)$ and assume that $Y > X$. Then, 

\[
\liminf_{m \to \infty}\RelVar_m \geq \frac{\pi}{4}(Y - X)^2 > 0.
\]

Hence, the sparse vector search problem is stable by the relative variance criterion. 

\end{restatable}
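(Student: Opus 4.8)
The plan is to verify the relative-variance criterion of Theorem~\ref{durrantconverse}: since $\RelVar_m \ge 0$ always, it suffices to produce a fixed positive lower bound on $\liminf_{m\to\infty}\RelVar_m$. I would obtain this by showing that the distance $\delta(q,d)=\|q-d\|_p$ is \emph{small} (at most $X$) on the overlap event of Definition~\ref{overlap-def} with probability at least $\pi$, while its \emph{mean} stays bounded away from $X$ because a concentrated document typically places its heavy coordinates away from those of the query. These two facts, a small-value event plus a large mean, force the variance to be bounded below.

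First, the small-distance (upper) bound. The elementary ingredient is that for $a\ge b\ge0$ and $p\ge1$, superadditivity of $t\mapsto t^p$ gives $(a-b)^p\le a^p-b^p$, so coordinatewise $|q_i-d_i|^p\le |q_i^p-d_i^p| = q_i^p+d_i^p-2\min\{q_i^p,d_i^p\}$. Summing and using $\|q\|_p=\|d\|_p=1$ yields
$$\|q-d\|_p^p \;\le\; 2 - 2\sum_{i} \min\{q_i^p,d_i^p\} \;\le\; 2 - 2\sum_{i\in T_q\cap T_d}\min\{q_i^p,d_i^p\}.$$
On the overlap event the last sum is at least $\gamma$, so $\delta\le(2-2\gamma)^{1/p}=X$ holds with probability at least $\pi$ by assumption~(b).

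Next, the large-mean (lower) bound, which I expect to be the crux. Restricting $q-d$ to $T_q$ and applying the reverse triangle inequality gives $\delta\ge\|q|_{T_q}\|_p-\|d|_{T_q}\|_p$. By assumption~(a), $\|q|_{T_q}\|_p=C_q(\kappa)^{1/p}\ge\alpha^{1/p}$ for every query; when $d$ is $(R\kappa,\alpha)$-concentrated (probability at least $\rho$) and its heavy set $T_d$ is disjoint from $T_q$, the mass of $d$ on $T_q$ is at most the tail $1-\alpha$, so $\|d|_{T_q}\|_p\le(1-\alpha)^{1/p}$ and hence $\delta\ge\alpha^{1/p}-(1-\alpha)^{1/p}$ on this event. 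Assumption~(c) with the independence of $q$ and $d$ controls the expected heavy-set overlap, $\E|T_q\cap T_d| = \sum_i\Pr[i\in T_q]\Pr[i\in T_d]\le \tau^2/m\to0$, so by Markov the heavy sets are disjoint with probability $1-o(1)$. Aggregating the concentration and disjointness events and tracking the constants (the $2^{1/p}$ from the crude scaling $\|q-d\|_p\le2^{1/p}$ and the $1/(1-\pi)$ from passing to the complement of the overlap event) gives $\liminf_{m\to\infty}\E[\delta]\ge Y$. Turning this qualitative ``heavy coordinates avoid each other'' picture into the exact constant $Y$, while simultaneously driving the $o(1)$ disjointness error to zero as $m\to\infty$, is the step I expect to require the most care.

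Finally, the variance assembly. Let $A$ denote the overlap event, so $\Pr[A]\ge\pi$ and $\delta\le X$ on $A$, while $\E[\delta]\ge Y>X$ by hypothesis. For every outcome in $A$ we have $\E[\delta]-\delta\ge Y-X>0$, hence
$$\Var[\delta] = \E\big[(\delta-\E[\delta])^2\big] \;\ge\; \E\big[(\delta-\E[\delta])^2\,\1_A\big] \;\ge\; (Y-X)^2\,\Pr[A] \;\ge\; \pi\,(Y-X)^2.$$
Since the triangle inequality gives $\delta=\|q-d\|_p\le\|q\|_p+\|d\|_p=2$, we have $\E[\delta]^2\le4$, so $\RelVar_m=\Var[\delta]/\E[\delta]^2\ge\tfrac{\pi}{4}(Y-X)^2$ for all large $m$. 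Taking $\liminf$ and invoking $Y>X$ yields $\liminf_{m\to\infty}\RelVar_m\ge\tfrac{\pi}{4}(Y-X)^2>0$, so the problem is stable by Theorem~\ref{durrantconverse}. The two $\ell_p$ inequalities and this closing variance computation are routine relative to the lower bound on $\E[\delta]$.
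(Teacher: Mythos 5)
Your skeleton matches the paper's proof in its two geometric regimes: the same coordinatewise inequality $(a-b)^p \le a^p - b^p$ gives $\delta \le X$ on the overlap event, and the same Markov/assumption-(c) argument makes the heads disjoint with probability $1-\tau^2/m$. Your closing one-sided variance bound (deviation of $\delta$ below the mean on the overlap event, then $\E[\delta]^2 \le 4$) is a legitimate, slightly more elementary alternative to the paper's law-of-total-variance step, which instead conditions on $Z=\mathbf{1}\{S\ge\gamma\}$ and uses $\Var[\delta] \ge \Pr[Z=1]\Pr[Z=0](\mu_1-\mu_0)^2$ with $\mu_1=\E[\delta\mid Z=1]$, $\mu_0=\E[\delta\mid Z=0]$. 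However, the step you defer as "requiring the most care" is exactly where your sketch breaks, in two concrete ways. First, the factor $2^{1/p}$ in $Y$ cannot come from "the crude scaling $\|q-d\|_p \le 2^{1/p}$": that is an upper bound on $\delta$ and can never strengthen a lower bound on $\E[\delta]$. In the paper it arises because, on the disjointness event intersected with the document-concentration event $C$, the reverse triangle inequality is applied \emph{twice}, once on $T_q$ and once on $T_d$, and the two disjoint blocks add: $\delta^p \ge \sum_{i\in T_q}|q_i-d_i|^p + \sum_{i\in T_d}|q_i-d_i|^p \ge 2\left(\alpha^{1/p}-(1-\alpha)^{1/p}\right)^p$. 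Your single application on $T_q$ yields only $\alpha^{1/p}-(1-\alpha)^{1/p}$, i.e., a constant smaller by $2^{1/p}$ and hence a strictly weaker theorem.

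Second, your assembly needs the \emph{unconditional} mean to satisfy $\liminf_m \E[\delta] \ge Y$, but the route you gesture at cannot produce the $1/(1-\pi)$. That factor is a property of the \emph{conditional} mean: the paper bounds $\mu_0 \ge \E[\delta \mid \{K=0\}\cap C]\cdot \Pr[\{K=0\}\cap C]/\Pr[Z=0]$ and the $1/(1-\pi)$ appears precisely because one divides by $\Pr[Z=0] \le 1-\pi$ (the overlap assumption). If you instead lower-bound $\E[\delta]$ directly by restricting to $\{K=0\}\cap C$, no division ever happens and you get only $\liminf_m \E[\delta] \ge \rho\, 2^{1/p}\left(\alpha^{1/p}-(1-\alpha)^{1/p}\right) = (1-\pi)Y < Y$, which breaks your final display, since it needs $\E[\delta]-\delta \ge Y-X$ pointwise on the overlap event. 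The repair is to route through the conditional mean and transfer back: $\E[\delta] \ge \mu_0\Pr[Z=0] \ge Y_m(1-\tau^2/m) \to Y$, invoking assumption (c) a second time to get $\Pr[Z=0]\ge 1-\tau^2/m$; alternatively, abandon the unconditional mean altogether and adopt the paper's decomposition with $\Pr[Z=1]\ge\pi$, $\Pr[Z=0]\ge 1-\tau^2/m$, and gap $\mu_0-\mu_1 \ge Y_m - X$. Either way, this conditional-versus-unconditional bookkeeping is the crux of the proof, not a routine detail, and as written your argument would establish only the weaker constant $\frac{\pi}{4}\left((1-\pi)Y-X\right)^2$ under the stronger hypothesis $(1-\pi)Y > X$.
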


We defer the full proof of this theorem to Appendix \ref{sec:sparse-stability-appendix}. This result is, to our knowledge, the first to prove sufficient conditions for stability in the context of sparse vector search, which is a particularly interesting setting given the high dimensionalities of learned sparse embeddings, such as SPLADE, used in practice.

The above theorem demonstrates that sparse vector search is stable whenever $X = (2 - 2\gamma)^{1/p}$ is less than $Y = \frac{\rho 2^{1/p}}{1-\pi}\left(\alpha^{1/p} - (1-\alpha)^{1/p}\right)$. These two expressions represent different geometric configurations for the vectors $q$ and $d$. The quantity $X$ is an upper bound on the expected distance when $q$ and $d$ overlap substantially in their head coordinates. In this case, the vectors partially align and the distance shrinks. In contrast, $Y$ represents the expected distance in the opposite configuration where $T_q$ and $T_d$ are disjoint. Here the vectors place their largest coordinates in opposite directions, and hence the distance grows. 

Equality $X = Y$ corresponds to a single algebraic relationship between parameters $\alpha, \gamma$ and $\pi$ where the two geometric configurations happen to produce the exact same distance. Geometrically, this means that placing mass on overlapping head coordinates reduces the distance by precisely the same amount that shifting the mass into disjoint coordinates would increase it. This cancellation requires the parameters $(\alpha, \gamma, \pi)$ to satisfy a rigid relation and therefore occurs on a set of measure zero on the parameter space. Everywhere else for all practically relevant parameterizations, the two configurations remain geometrically distinct, which ensures that the distribution of the distances does not collapse in the limit, which we require for stability. Furthermore, in Appendix \ref{sec:empirical-sparse-appendix}, we empirically verify that the conditions of Theorem \ref{thm:stability-coi-overlap} hold true in practice on real SPLADE embeddings. 

\subsection{Stability of Synthetic Sparse Embeddings}

In this section, we evaluate Theorem~\ref{thm:stability-coi-overlap} using synthetic sparse embeddings constructed to reflect the structural patterns commonly observed in modern sparse neural embeddings. Two properties guide our design, namely \emph{(1) head concentration} where a small set of coordinates carries most of the $\ell_p$ mass and \emph{(2) semantic locality}, where high-importance dimensions cluster within coherent neighborhoods instead of being uniformly distributed across the embedding vector. To reproduce these behaviors, our data generator samples high-importance dimensions from a Zipf-shaped distribution and orients this selection towards a semantic center, which causes most of the mass to concentrate within a contiguous set of dimensions.


Figure~\ref{fig:sparse-stability} shows the results of the experiments when we set $\alpha = 0.83$, $\gamma = 0.2$, and $\pi = 0.5$ for the overlap probability. In particular, the CoI-only regime highlights that even strongly-concentrated embeddings exhibit distance collapse as the dimension grows when there is not enough overlap in their heads. These results confirm that neither concentration nor overlap of importance are sufficient conditions for stability on their own.

\begin{figure}[H]
\centering
\includegraphics[width=1.05\linewidth]{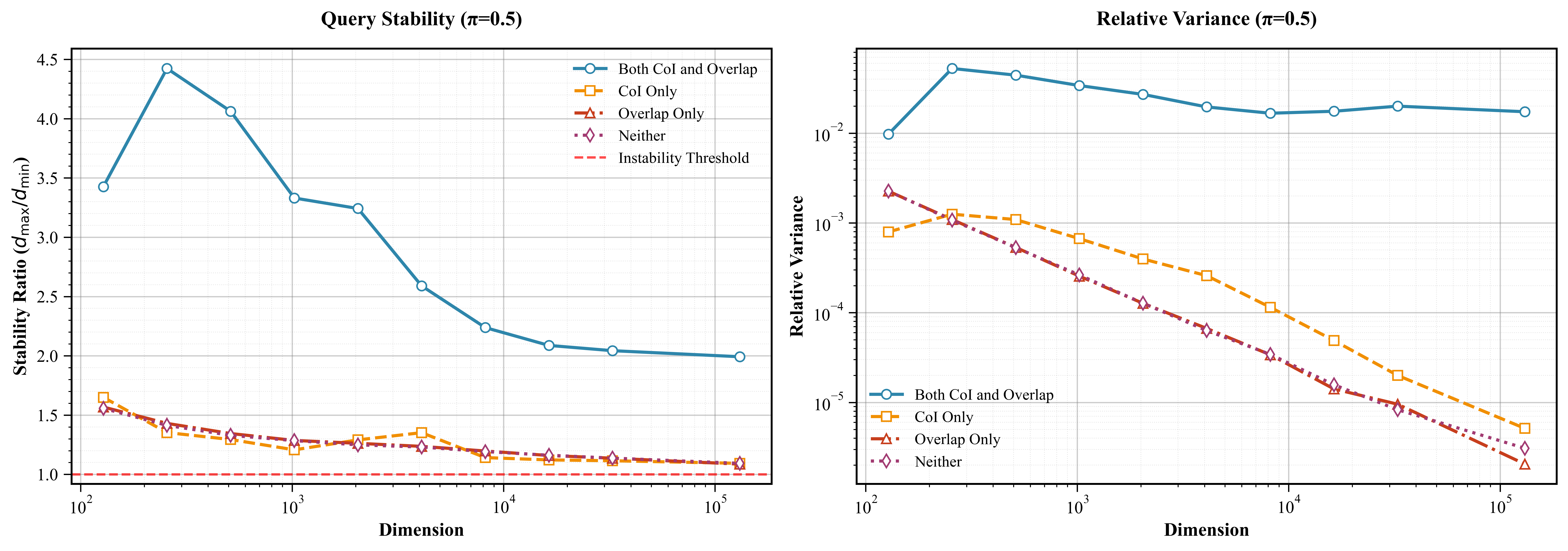}
\caption{We observe that search over sparse embeddings that violate either concentration or overlap of importance or both exhibits instability. However, when we fulfill both conditions we reach stability.}
\label{fig:sparse-stability}
\end{figure}

\section{Conclusion}

Modern vector search, driven by advances in neural representation learning, has achieved remarkable success in enabling information retrieval across a variety of modalities. However, the high dimensions of these embeddings suggest that vector retrieval over neural representations should suffer from the curse of dimensionality where the distances between all points tends to converge to a uniform value, thereby rendering approximate search algorithms as ineffective. This paper investigates why modern sublinear vector retrieval algorithms succeed in spite of the curse of dimensionality by building off foundational results on near-neighbor search \emph{stability}. We introduce formal models for three widely popular variants of vector retrieval, namely multi-vector search, filtered vector search, and sparse vector search, and identify sufficient conditions for which near-neighbor search remains stable in these settings. Crucially, we note that the mathematical techniques we present in this work are general in nature and can very likely be applied to other vector search settings as well. We believe this may constitute an exciting direction for future work. 

\bibliography{example_paper}
\bibliographystyle{icml2026}

\newpage
\appendix
\onecolumn
\newpage

\section{Extended Related Work: Intrinsic Dimensionality}
\label{sec:intrinsic-dim}
A closely related concept to stability is \emph{intrinsic dimensionality} (ID) which measures the dimensionality of the subspace that a collection of vectors lie within. Intuitively, if a database of high-dimensional vectors lies within a much lower-dimensional manifold, then a near-neighbor search over these points is insulated from the curse of dimensionality effect. Thus, much like stability, ID provides a foundational theoretical tool for explaining why modern vector retrieval is capable of circumventing the curse of dimensionality. Consequently, ID has been well-studied with multiple proposed formal definitions. Perhaps the most influential definition of ID is that of \cite{karger2002finding} via a geomeric property called the \emph{expansion rate}. Another popular definition heavily utilized in near-neighbor search algorithm design is the notion of \emph{doubling dimension} proposed by \cite{gupta2003bounded}. We refer interested readers to the recent book of \cite{bruch2024foundations} for the precise definitions. More recently, several works have measured the intrinsic dimensionality of modern neural embeddings and have reported that the ID is often much less than the ambient dimensionality. For instance, \cite{razzhigaev2024shape} study the intrinsic dimensionality of embeddings from transformer models \citep{vaswani2017attention} and find that the ID typically lies between 20 and 60 over the course of pretraining. In addition, \cite{gong2019intrinsic} compute the intrinsic dimensionality of popular image model embeddings, finding that the ID in practice typically ranges from 12 to 19, well below the ambient dimensions of 128 to 512. Moreover, \cite{ansuini2019intrinsic} go beyond embeddings and measure the ID of convolutional neural network (CNN) layers where they find the ID is an order of magnitude smaller than the ambient dimension. \\

Given that neural models consistently exhibit low intrinsic dimensionality, a natural question to ask is whether we still need to study why modern vector databases succeed in spite of the curse of dimensionality. On the contrary, we argue that intrinsic dimensionality alone is  insufficient to explain the phenomena we see in practice. First and foremost, as we empirically measure in Section \ref{sec:practical-implications}, a dimensionality as low as 32, which is within the empirical ID measurements of real embeddings, is still high enough for the curse of dimensionality to diminish the search quality of approximate near neighbor search algorithms. This observation motivates the need to study stability explicitly. Secondly, modern embeddings produced by large language models and used in practice continue to grow in dimension into the thousands and tens of thousands of components. It is not clear if the intrinsic dimensionality will remain low as these embedding models scale and yet we observe they remain amenable to sub-linear vector search techniques. Moreover, stability analysis provides actionable guidance that ID 
alone cannot. For instance, our multi-vector stability theorem explains 
why ColBERT's choice of Chamfer distance succeeds where average pooling 
might fail, a distinction invisible to ID measurements which only 
characterize the embedding space geometry, not the aggregation 
function. Similarly, our filtered search results suggest that 
practitioners can induce stability through sufficiently large 
penalties, even when the underlying embeddings are intrinsically 
high-dimensional.

\section{Practical Implications of Stability}
\label{sec:practical-implications}

In this section, we empirically demonstrate that stability is not merely a theoretical curiosity but also has substantial bearing on the quality of vector retrieval algorithms in practice. To our knowledge, this is the first empirical demonstration of the effects of instability on popular approximate ANN search algoritms. In Figure \ref{fig:algorithmic-stability}, we report the results of running two of the most popular ANN search techniques, namely the hierarchical navigable small worlds (HNSW) graph-based index \citep{malkov2018efficient} and the inverted file (IVF) partitioning-based index \citep{douze2024faiss}, across stable and unstable datasets. 

We fix a standard configuration of hyperparameters for each respective algorithm and compute the top-10 recall by taking the average of 1000 query vectors over a database of 1 million document vectors. To generate the unstable datasets at varying dimensionalities, we sample the value of each component independently from a univariate standard Gaussian distribution following the classic construction of \citep{beyer1999nearest}. To construct the stable datasets, we generate vectors such that they belong to one of five distinct clusters. For consistency, we fix the search index hyperparameters across all experiments; for HNSW, we set \texttt{ef-construction} and \texttt{ef-search} to be $200$ and $\texttt{M}=16$; for IVF, we set the number of clusters to $\texttt{nlist}=\sqrt{n}$, where $n$ is the number of database elements and the number of probes to $\texttt{nprobes}=\frac{\sqrt{n}}{4}$.

\begin{figure}[H]
\centering
\includegraphics[width=0.7\linewidth]{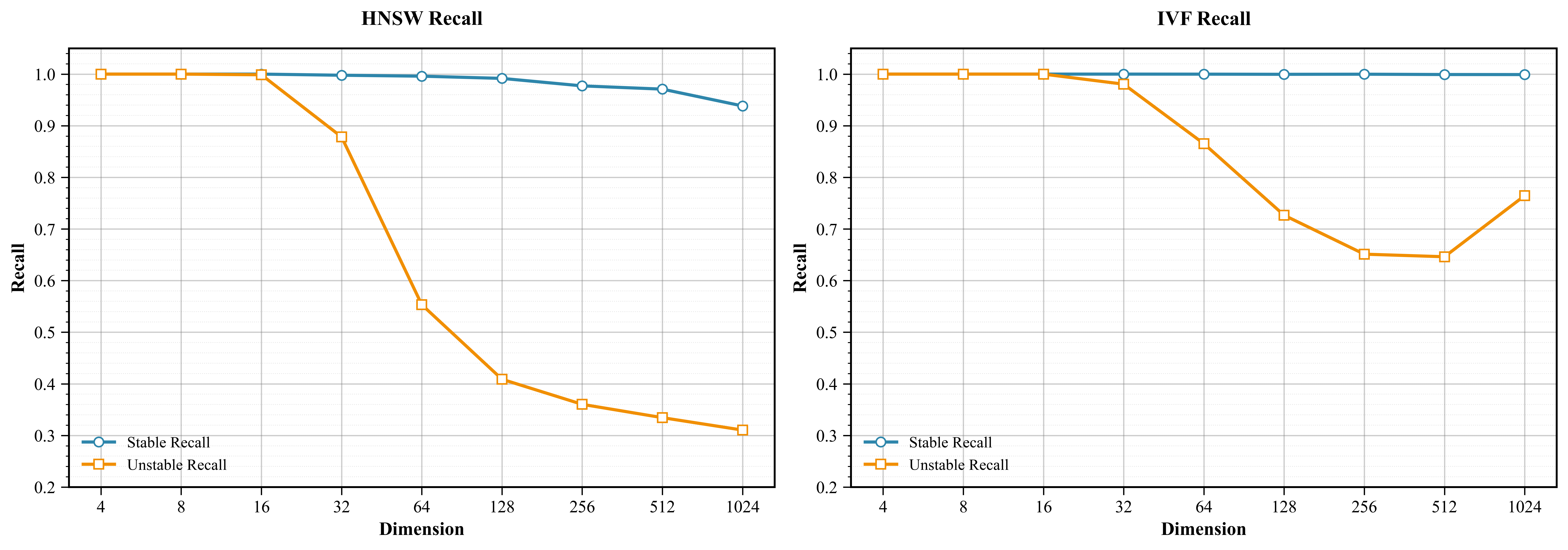}
\caption{Comparison of stable vs. unstable recall@10 across algorithms and dimensions. We observe the instability causes approximate search algorithms to lose recall in higher dimensions while stability preserves search quality as the dimensionality scales.}
\label{fig:algorithmic-stability}
\end{figure}

From Figure \ref{fig:algorithmic-stability}, we see that these approximate near-neighbor search algorithms severely suffer in recall over unstable databases since the lack of contrast between points renders sublinear algorithms that avoid a brute force scan ineffective. We also see that the impact of instability worsens in higher dimensions. These results underscore the importance of stability in practice.

\section{Stability of Multi-Vector Search}
\label{sec:multi-vec-appendix}
\subsection{Proofs}

\begin{restate}
\textbf{Theorem \ref{thm:multi-vec-equivalence}} 
For sufficiently large databases, a multi-vector search instance is stable if and only if $$\lim_{m \to \infty} \frac{\Var\left[\mathrm{Agg}(Q, D)\right]}{\E\left[\mathrm{Agg}(Q, D) \right]^2} \ne 0$$
\end{restate}

\begin{proof}
This result follows directly from the prior proofs of \citep{beyer1999nearest} and \citep{durrant2009nearest} since the original proofs require only that the distance metric is non-negative. We provide a full, self-contained argument here for completeness. 

As a matter of notation, let $\mathrm{Agg}(Q,D)$ be the random variable for the aggregated distance in an $m$-dimensional space. Let its expected value be $\mathbb{E}[\mathrm{Agg}(Q, D)]$ and its variance be $\text{Var}[\mathrm{Agg(Q, D)}]$. The database produces $n$ samples of this random variable. Let $\mathrm{DMIN} = \min_{D \in \mathcal{D}} \mathrm{Agg}(Q, D)$ and $\mathrm{DMAX} = \max_{D \in \mathcal{D}} \mathrm{Agg}(Q, D)$.

$\implies$ This direction follows the argument from \citep{beyer1999nearest}. We will show that if $\lim_{m\to\infty} \frac{\text{Var}[\mathrm{Agg}(Q, D)]}{\mathbb{E}[\mathrm{Agg}(Q, D)]^2} = 0$, then the multi-vector search problem is unstable. Define a normalized random variable $Y = \frac{\mathrm{Agg}(Q, D)}{\mathbb{E}[\mathrm{Agg}(Q, D)]}$. Its expectation is $\mathbb{E}[Y] = 1$ and its variance is $\text{Var}[Y] = \frac{\text{Var}[\mathrm{Agg}(Q, D)]}{\mathbb{E}[\mathrm{Agg}(Q, D)]^2}$ The condition is thus equivalent to $\lim_{m\to\infty} \text{Var}[Y] = 0$. Since $\lim_{m\to\infty} \mathbb{E}[Y] = 1$ and $\lim_{m\to\infty} \text{Var}[Y] = 0$, the random variable $Y$ converges in probability to 1. Let $\{Y_{1}, \dots, Y_{n}\}$ be the normalized distances for the $n$ items in the database. Since $\min()$ and $\max()$ are continuous functions, Slutsky's Theorem implies that $\min Y_{i} \to_p 1$ and $\max Y_{i} \to_p 1$. Since both the numerator and denominator converge in probability to 1, their ratio also converges in probability to 1. By definition, $\frac{\mathrm{DMAX}}{\mathrm{DMIN}} \to_p 1$ means that for any $\epsilon > 0$, we have $\lim_{m\to\infty} \Pr\left[\left|\frac{\mathrm{DMAX}}{\mathrm{DMIN}} - 1\right| \le \epsilon\right] = 1$. Since $\mathrm{DMAX} \ge \mathrm{DMIN}$, this is equivalent to $\lim_{m\to\infty} \Pr\left[\mathrm{DMAX} \le (1+\epsilon)\mathrm{DMIN}\right] = 1$. This is precisely the definition of an unstable search problem as stated in Definition \ref{def:multi-vec-stability}. \\

$\impliedby$ This converse result follows the proof presented by \citep{durrant2009nearest}. We proceed by proving the contrapositive: If the multi-vector search problem is unstable, then the relative variance must be zero. Assume the problem is unstable. Thus, for any $\epsilon > 0$, $\lim_{m\to\infty} \Pr\left[\mathrm{DMAX} \le (1+\epsilon)\mathrm{DMIN} \right] = 1$. This directly implies that the ratio of distances converges in probability to 1: $\frac{\mathrm{DMAX}}{\mathrm{DMIN}} \to_p 1$. For a sufficiently large database, we can assume that the theoretical expectation $\mathbb{E}[\mathrm{Agg}(Q, D)]$ lies between the sample minimum and maximum. This allows us to bound the normalized random variable $Y = \frac{\mathrm{Agg}(Q, D)}{\mathbb{E}[\mathrm{Agg}(Q, D)]}$ for any sample $D_i \in \mathcal{D}$ as follows:
$$ \frac{\mathrm{DMIN}}{\mathrm{DMAX}} \le \frac{\mathrm{Agg}(Q, D_i)}{\mathbb{E}[\mathrm{Agg}(Q, D_i)]} \le \frac{\mathrm{DMAX}}{\mathrm{DMIN}} $$

From the instability assumption, we know both the lower bound $\frac{\mathrm{DMIN}}{\mathrm{DMAX}}$ and the upper bound $\frac{\mathrm{DMAX}}{\mathrm{DMIN}}$ converge in probability to 1 By the Squeeze Theorem, the random variable $Y$ must also converge in probability to 1. A random variable that converges in probability to a constant must have a variance that converges to zero. Therefore:
    $$ \lim_{m\to\infty} \text{Var}[Y] = \lim_{m\to\infty} \frac{\text{Var}[\mathrm{Agg}(Q, D)]}{\mathbb{E}[\mathrm{Agg}(Q, D)]^2} = 0 $$

This completes the proof of the contrapositive.
\end{proof}

\begin{restate}

\textbf{Lemma \ref{lem:limit_transfer}}[Chamfer Stability with Singleton Query Sets]
Let \((\mathcal{Q}, \mathcal{D}, \delta, \mathrm{Chamfer})\) be a multi-vector search problem where $|Q_i| = 1$ for each query set $Q_i \in \mathcal{Q}$. Let \((Q', D', \delta)\) denote the corresponding induced single-vector instance . 
Let \(q \sim Q'\) and let \(\{D_1, \dots, D_n\}\) be the document sets in \(\mathcal{D}\).

Assume the induced single-vector search problem is $c$-strongly stable

If the \emph{non-degeneracy condition}
\[
c\cdot \max_k \min_{d \in D_k} \delta(q, d)
\;\ge\;
\max_k \max_{d \in D_k} \delta(q, d)
\]
holds, meaning that there exists at least one document set whose vectors are all sufficiently far from a given query vector, then the multi-vector search instance is stable.  

\end{restate}

\begin{proof}
Let
\[
\mathrm{DMIN} = \min_{d \in D^\prime} \delta(q, d),
\quad
\mathrm{DMAX} = \max_{d \in D^\prime} \delta(q, d).
\]
Also define
\[
\mathrm{DMIN}^{\prime} = \min_k \min_{d \in D_k} \delta(q, d) = \mathrm{DMIN},
\quad
\mathrm{DMAX}^{\prime} = \max_k \min_{d \in D_k} \delta(q, d).
\]
We thus have
\[
\frac{\mathrm{DMAX}^{\prime}}{\mathrm{DMIN}^{\prime}}
= \frac{\max_k \min_{d \in D_k} \delta(q, d)}{\mathrm{DMIN}}
\ge \frac{1}{c}\frac{\mathrm{DMAX}}{\mathrm{DMIN}} > \frac{1}{c}(c + \epsilon) > 1.
\]

for some constant $\epsilon > 0$ by $c$-strong stability. Thus, it follows that $\frac{\mathrm{DMAX}^{\prime}}{\mathrm{DMIN}^{\prime}} \not \to_p 1$, completing the proof.
\end{proof}

Before proceeding to our main theorem, we state one additional helper lemma. 

\begin{lemma}[Dan's Favorite Inequality \citep{Spielman2018}]
\label{lem:dan-favorite}

Let $a_1, \dots, a_n$ and $b_1, \dots, b_n$ be positive numbers. Then 

$$\min_i \frac{a_i}{b_i} \le \frac{\sum_i a_i}{\sum_i b_i} \le \max_i \frac{a_i}{b_i}$$

\end{lemma}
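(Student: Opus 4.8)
The plan is to prove the two bounds separately, each by the same elementary move: convert the extremal ratio into a pointwise inequality on every index, then sum. First I would handle the lower bound. Set $r = \min_i \frac{a_i}{b_i}$. By definition of the minimum, $\frac{a_i}{b_i} \ge r$ for every index $i$, and since each $b_i > 0$ I can clear the denominator without reversing the inequality to obtain $a_i \ge r\, b_i$. Summing these $n$ inequalities gives $\sum_i a_i \ge r \sum_i b_i$, and dividing through by $\sum_i b_i > 0$ yields $\frac{\sum_i a_i}{\sum_i b_i} \ge r$, which is exactly the claimed lower bound $\min_i \frac{a_i}{b_i}$.

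The upper bound follows by the mirror-image argument. Setting $R = \max_i \frac{a_i}{b_i}$, the defining property of the maximum gives $\frac{a_i}{b_i} \le R$, hence $a_i \le R\, b_i$ for each $i$ after clearing the positive denominator; summing over $i$ and dividing by $\sum_i b_i > 0$ gives $\frac{\sum_i a_i}{\sum_i b_i} \le R$. Chaining the two bounds together completes the proof. This is the classical \emph{mediant} inequality, so I expect no conceptual difficulty in carrying it out.

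The only point worth flagging—more a bookkeeping matter than a genuine obstacle—is the twofold role of positivity. Positivity of each individual $b_i$ is what permits clearing the denominator of $\frac{a_i}{b_i} \ge r$ (and $\le R$) while preserving the direction of the inequality, and positivity of the aggregate $\sum_i b_i$ is what permits the final division. Both are guaranteed by the hypothesis that all $a_i$ and $b_i$ are positive, so the argument needs no case analysis and no additional assumptions; the entire proof is a two-line manipulation in each direction.
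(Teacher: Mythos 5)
Your proof is correct and complete: the pointwise inequalities $r\,b_i \le a_i \le R\,b_i$ (valid because each $b_i > 0$), followed by summation and division by $\sum_i b_i > 0$, is exactly the standard argument for the mediant inequality, and your attention to where positivity is used is appropriate. Note that the paper itself offers no proof of this lemma --- it is cited directly to \citet{Spielman2018} --- so your write-up supplies a self-contained argument where the paper relies on an external reference; there is no discrepancy in approach to report.
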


\begin{restate}
\textbf{Theorem \ref{thm:chamfer_stability}}
Let \((\mathcal{Q}, \mathcal{D}, \delta, \mathrm{Chamfer})\) be a multi-vector search problem where $|Q| = k$ for some constant $k \ge 1$ for all $Q \in \mathcal{Q}$. Let $A_i = \min_{d \in D} \delta(q_i, d)$ be the nearest-neighbor distance for an individual query vector $q_i \in Q$. The problem is stable if: 
\begin{enumerate}
\item[(a)] The induced single-vector search instance is $c$-strongly stable

\item[(b)] The document sets satisfy non-degeneracy as defined in Lemma \ref{lem:limit_transfer}

\item[(c)] $\sum_{i < j} \Cov(A_i, A_j) \ge 0$
\end{enumerate}
\end{restate}

\begin{proof}
We will show that 
\[
\lim_{m \to \infty} 
\frac{\Var[\mathrm{Chamfer}(Q,D)]}{\E[\mathrm{Chamfer}(Q,D)]^2} > 0
\]

which implies the desired result.

To begin, we observe that 

\[
\lim_{m \to \infty} 
\frac{\Var[\mathrm{Chamfer}(Q,D)]}{\E[\mathrm{Chamfer}(Q,D)]^2} = \lim_{m \to \infty} \frac{\Var\Big(\sum_{i=1}^k A_i\Big)}{\Big(\E\left[\sum_{i=1}^k A_i\right]\Big)^2} = \lim_{m \to \infty} \frac{\Var\Big(\sum_{i=1}^k A_i\Big)}{\Big(\sum_{i=1}^k \E[A_i] \Big)^2}
\]

where the second equality follows from the linearity of expectation. 

We note that we can lower bound the numerator of the above expression as follows:
\[
\Var\Big(\sum_{i=1}^k A_i\Big) = \sum_{i=1}^k \Var[A_i] + 2 \sum_{i < j} \Cov(A_i, A_j) \ge \sum_{i=1}^k \Var[A_i]
\]

where the inequality follows from the third assumption in the theorem statement. 

Furthermore, we can upper bound the term in the denominator via the Cauchy-Schwarz inequality

\[
\Big(\sum_{i=1}^k \E[A_i]\Big)^2 \le k \sum_{i=1}^k \E[A_i]^2.
\]

Putting these bounds together, we have 
\[
\lim_{m \to \infty} \frac{\Var[\sum_{i=1}^k A_i]}{\big(\sum_{i=1}^k \E[A_i]\big)^2} 
\ge \lim_{m \to \infty} \frac{\sum_{i=1}^k \Var[A_i]}{k \sum_{i=1}^k \E[A_i]^2} 
\ge \frac{1}{k} \lim_{m \to \infty} \min_i \frac{\Var[A_i]}{\E[A_i]^2} > 0
\]

where the last two inequalities follow from applying Lemma \ref{lem:dan-favorite} and Lemma \ref{lem:limit_transfer} respectively. This completes the proof. 

\end{proof}

\begin{remark}
Stability follows from applying Dan's Favorite Inequality (Lemma \ref{lem:dan-favorite}) to reduce the general multi-vector search problem to multi-vector search over singleton query sets, namely the setting in Lemma \ref{lem:limit_transfer}, at the cost of an $\frac{1}{k}$ overhead factor. The $1/k$ factor suggests that stability weakens as the query set size grows, which is also consistent with empirical observations.
\end{remark}

\subsection{Empirical Validation of Multi-Vector Theorem Assumptions}
\label{sec:empirical-chamfer-appendix}
Theorem \ref{thm:chamfer_stability} relies on three assumptions to prove the stability of multi-vector search under Chamfer distance: $c$-strong stability, non-degenerate document sets, and a non-negative bound on the sum of the covariances of the individual Chamfer distance terms. In this section, we experimentally demonstrate that all three of these assumptions are satisfied in practice, which underscores the practical utility of our result. Specifically, we use ColBERT embeddings \cite{khattab2020colbert} to construct query and document vector sets for five standard English-based information retrieval datasets from the BEIR benchmark \cite{thakur2beir}. For each dataset, we empirically estimate the strong stability constant $c$ as the smallest value of the ratio between the maximum distance and the minimum distance across all queries in the induced single-vector search instance. We then use this value of $c$ to check the proportion of document sets that satisfy the non-degeneracy condition. Finally, we then compute the sum of the covariances of the near-neighbor distances of individual query vectors in a given query set and verify that the sum is non-negative. This non-negative covariance assumption is natural in practice since the tokens in a query set tend to be topically related. We summarize our empirical measurements in Table \ref{tab:multivector-dataset-validation} where we see that all three theorem conditions are comfortably satisfied in practice.

\begin{table}[H]
\centering
\resizebox{0.5\linewidth}{!}{%
\begin{tabular}{@{}lccc@{}}
\toprule
\textbf{Dataset} & \makecell{\textbf{Stability} \\ \textbf{Constant} ($c$)} & \makecell{\textbf{Non-Degeneracy} \\ \textbf{Pass Rate (\%)}} & \makecell{\textbf{Cov. Bound} \\ \textbf{Pass Rate (\%)}} \\
\midrule
MS MARCO\footnotemark & 2.14 & 100.0 & 100.0 \\
Natural Questions & 2.36 & 100.0 & 100.0 \\
HotpotQA & 2.12 & 100.0 & 100.0 \\
TREC-COVID & 2.59 & 100.0 & 100.0 \\
NFCorpus & 1.91 & 100.0 & 100.0 \\
\midrule
\textbf{Requirement} & $> 1$ & $=100$ & $=100$ \\
\bottomrule \\
\end{tabular}
}%
\caption{Validation of Theorem~\ref{thm:chamfer_stability}'s assumptions across multiple datasets. All conditions are satisfied across all datasets, demonstrating the practical relevance of our theoretical results.}
\label{tab:multivector-dataset-validation}
\end{table}

\footnotetext{
For MS MARCO, we use the Passage Ranking dataset: \url{https://github.com/microsoft/MSMARCO-Passage-Ranking}.
}

\subsection{Stability of ColBERT Embeddings}

Following our empirical validation of Theorem \ref{thm:chamfer_stability}'s conditions in the previous section, we also compute the empirical stability ratio (maximum distance to minimum distance per query) and relative variance on a subset of the MSMarco dataset with ColBERT embeddings. We note that this experiment does not, strictly speaking, relate to the theoretical result since ColBERT embeddings maintain a fixed embedding size while stability is an asymptotic property. Nevertheless, we observe in Figure \ref{fig:colbert_stability} that the asymptotic behavior holds in practice at the ColBERT dimensionality of 768. Under Chamfer distance, the ColBERT vector sets are overwhelmingly more stable than with averaging, which aligns with the fact that we can rigorously prove stability under Chamfer distance but not with averaging. 

\begin{figure}[H]
\centering
\includegraphics[width=0.6\linewidth]{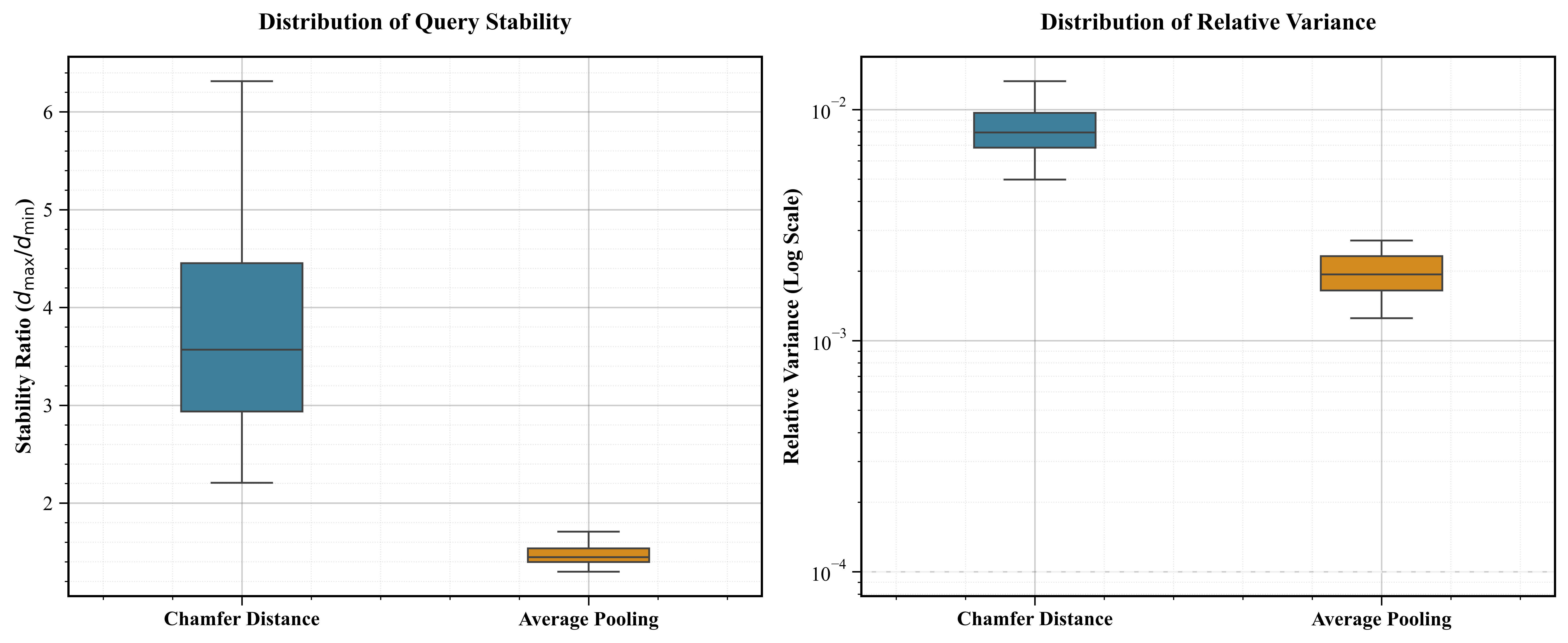}
\caption{Box plots of query stability and relative variance on the MSMarco dataset. We observe that Chamfer distance is more stable than average pooling, which supports our theoretical findings.}
\label{fig:colbert_stability}
\end{figure}

\section{Stability of Filtered Vector Search}
\label{sec:filtered-appendix}

\begin{restate}
\textbf{Theorem \ref{thm:filtered-stability}} 
Consider the filtered near neighbor search setting with penalized distance functions. Suppose that the filter mismatch probability for any query and document is bounded by $0 < p_{\max} < 1$ for some constant $p_{\max}$. If the penalty value satisfies $$\alpha > \frac{2 \Delta}{ 1 - p_{\max}}$$ where $\Delta = \max_{q \in Q, d \in D} \delta(q, d)$, then the filtered search problem is stable. 

\end{restate}

\begin{proof}
As a shorthand, let $\delta = \delta(q, d)$, let the indicator variable $1_c = \mathbf{1}{\{f(A_q, A_d) = 0\}}$ denote a filter mismatch, let and $p_c = \E[1_c]$. We analyze the limiting behavior of the Relative Variance of $\delta'$, denoted $\RelVar'$.

The expression for $\RelVar'$ can be derived as follows. First, the variance and expectation of $\delta'$ are:
$$ \E[\delta^\prime] = \E[\delta] + \alpha p_c $$
$$ \Var[\delta^\prime] = \Var[\delta] + \alpha^2 \Var[1_c] + 2 \alpha \Cov(\delta, 1_c) $$ 
Therefore,
\begin{align*}
\RelVar' &= \frac{\Var[\delta^\prime]}{\E[\delta^\prime]^2} \\
&= \frac{\Var[\delta] + \alpha^2 \Var[1_c] + 2 \alpha \Cov(\delta, 1_c)}{(\E[\delta] + \alpha p_c)^2}
\end{align*}

Dividing both the numerator and the denominator by $\E[\delta]^2$ gives us

\begin{align*}
    \RelVar' &= \frac{\frac{\Var[\delta]}{\E[\delta]^2} + \frac{\alpha^2 p_c(1-p_c)}{\E[\delta]^2} + 2 \alpha \frac{\Cov(\delta, 1_c)}{\E[\delta]^2}}{(1 + \frac{\alpha p_c}{\E[\delta]})^2} \\ \\
    &\ge \frac{\frac{\alpha^2 p_c(1-p_c)}{\Delta^2} + 2 \alpha \frac{\Cov(\delta, 1_c)}{\E[\delta]^2}}{(1 + \frac{\alpha p_c}{\E[\delta]})^2}
\end{align*}

where we use the facts that $\frac{\Var[d]}{\E[\delta]^2} \ge 0$ and $\Delta \ge \E[\delta]$

We next observe that
\begin{align*}
\frac{\Cov(\delta, 1_c)}{\E[\delta]} &= \frac{\E[\delta 1_c] - \E[\delta]\E[1_c]}{\E[\delta]}\\
&= \frac{\E[\delta 1_c]}{\E[\delta]} - \E[1_c]\\
&\geq 0 - p_c
\end{align*}
where the last inequality results from the fact that both $\delta \geq 0$ and $1_c \geq 0$. 

Applying this bound to our previous expression gives us 

\begin{align*}
    \RelVar^\prime \ge
    \frac{\frac{\alpha^2 p_c(1-p_c)}{\Delta^2} + 2 \alpha \frac{\Cov(d, 1_c)}{\Delta \E[d]}}{(1 + \frac{\alpha p_c}{\E[\delta]})^2}
    \ge \frac{\frac{\alpha^2 p_c(1-p_c)}{\Delta^2} - \frac{2 \alpha p_c}{\E[\delta]}}{(1 + \frac{\alpha p_c}{\E[\delta]})^2}
\end{align*}

Our goal is to show that this final expression is strictly positive, which is true if the following inequality holds

$$\frac{\alpha^2 p_c(1-p_c)}{\Delta^2} > \frac{2 \alpha p_c}{\E[\delta]} \ge \frac{2 \alpha p_c}{\Delta}$$

Thus, we can conclude that, if $\alpha > \frac{2 \Delta}{1 - p_{\max}}$, then $$ \lim_{m\to\infty} \RelVar' > 0$$ which completes the proof.
\end{proof}

\section{Stability of Sparse Vector Search}
\label{sec:sparse-stability-appendix}
\subsection{Proofs}
    
Recall that we are assuming $q, d \in \mathbb{R}_{\geq 0}^m$ and are normalized (i.e., $||q||_d = ||d||_d = 1$). We will start with the following lemma, which we use to assert that restricting our setting to $\mathbb{R}_{\geq 0}^m$ entails no loss of generality.

\begin{lemma}\label{bi-lipschitz-lemma}
Let $\phi : \mathbb{R}^m \to \mathbb{R}^{2m}$ be given by $\phi(x) = (x^+, x^-)$ with $x_i^+ = \max\{x_i, 0\}$ and $x_i^- = \max\{-x_i, 0\}.$ Then, for all $x,y \in \mathbb{R}^m$ and $p \geq 1,$
\[
2^{-(1-1/p)} ||x-y||_p \leq ||\phi(x) - \phi(y)||_p \leq 2^{1/p}||x-y||_p.
\]
\end{lemma}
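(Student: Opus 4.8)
The plan is to reduce the vector inequality to a coordinatewise scalar estimate, exploiting the fact that $\phi$ acts separately on each coordinate and that $\|\cdot\|_p^p$ is additive across coordinates. Writing $\phi(x)-\phi(y)=(x^+-y^+,\,x^--y^-)$, we have $\|\phi(x)-\phi(y)\|_p^p=\sum_{i=1}^m\bigl(|x_i^+-y_i^+|^p+|x_i^--y_i^-|^p\bigr)$, so it suffices to bound the single-coordinate quantity $|x_i^+-y_i^+|^p+|x_i^--y_i^-|^p$ above and below by multiples of $|x_i-y_i|^p$ with the stated constants, and then sum over $i$ and take $p$-th roots.

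For the upper bound, I would use that the maps $t\mapsto t^+=\max\{t,0\}$ and $t\mapsto t^-=\max\{-t,0\}$ are each $1$-Lipschitz on $\mathbb{R}$. Hence $|x_i^+-y_i^+|\le|x_i-y_i|$ and $|x_i^--y_i^-|\le|x_i-y_i|$, giving $|x_i^+-y_i^+|^p+|x_i^--y_i^-|^p\le 2\,|x_i-y_i|^p$ for every coordinate. Summing over $i$ yields $\|\phi(x)-\phi(y)\|_p^p\le 2\|x-y\|_p^p$, and taking $p$-th roots produces the factor $2^{1/p}$.

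For the lower bound, the key identity is $x_i-y_i=(x_i^+-y_i^+)-(x_i^--y_i^-)$, since $t=t^+-t^-$. Setting $u=x_i^+-y_i^+$ and $v=x_i^--y_i^-$, the triangle inequality gives $|x_i-y_i|\le|u|+|v|$, and the elementary $\ell_1$--$\ell_p$ comparison on $\mathbb{R}^2$ (equivalently, the power-mean inequality, valid for $p\ge 1$) gives $|u|+|v|\le 2^{\,1-1/p}\bigl(|u|^p+|v|^p\bigr)^{1/p}$. Combining these and raising to the $p$-th power yields $|x_i-y_i|^p\le 2^{\,p-1}\bigl(|x_i^+-y_i^+|^p+|x_i^--y_i^-|^p\bigr)$; summing over $i$ gives $\|x-y\|_p^p\le 2^{\,p-1}\|\phi(x)-\phi(y)\|_p^p$, which rearranges to $\|\phi(x)-\phi(y)\|_p\ge 2^{-(1-1/p)}\|x-y\|_p$.

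This argument is essentially routine, so I do not anticipate a serious obstacle; the only place the hypothesis $p\ge 1$ is genuinely used is the $\ell_1$--$\ell_p$ inequality in the lower bound (for $p<1$ the map $t\mapsto t^p$ is no longer convex and the constant degrades). I would also remark in passing that the upper bound can actually be sharpened to constant $1$ via a case analysis on the signs of $x_i$ and $y_i$ (both nonnegative, both nonpositive, and opposite signs, the last handled by superadditivity of $t\mapsto t^p$), but the cruder $2^{1/p}$ bound from Lipschitzness already suffices for the paper's purposes and matches the stated constant.
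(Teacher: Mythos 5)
Your proof is correct and takes essentially the same approach as the paper's: both reduce to a coordinatewise scalar inequality, get the upper bound from the $1$-Lipschitz property of $t\mapsto t^{+}$ and $t\mapsto t^{-}$, and get the lower bound from the identity $x_i-y_i=(x_i^+-y_i^+)-(x_i^--y_i^-)$ together with a duality estimate --- your triangle-inequality-plus-$\ell_1$--$\ell_p$-comparison step is the paper's H\"older argument with the vector $(1,-1)$ in disguise, since $2^{1-1/p}=2^{1/q}$ for the conjugate exponent $q$. Your side remark that the upper constant can be sharpened to $1$ by a sign case analysis is also correct, though not needed.
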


\begin{proof}

For $x,y \in \mathbb{R}^m,$ define $\Delta_i \coloneqq |x_i - y_i|,\;\; A_i \coloneqq |x_i^+ - y_i^+|, \;\; B_i \coloneqq |x_i^- - y_i^-|$.

We will first show that for scalars $u,v \in \mathbb{R}$ and $p \geq 1$
\begin{align}
    2^{-(p-1)}|u-v|^p \leq |u^+ - v^+|^p + |u^- - v^-|^p \leq 2 |u - v|^p.\label{eq-scalar-inequality}
\end{align}

To see why, set $a = u^+ - v^+$ and $b = u^- - v^-$ and observe that $u - v = a - b$. Using Hölder’s inequality in $\mathbb{R}^2$ with conjugate exponents $p$ and $q = p/(p-1)$, we have that 
\[
|u-v| = |\langle(a,b), (1,-1)\rangle| \leq ||(a,b)||_p ||(1,-1)||_q.
\] 

Raising to the $p$-$\text{th}$ power, we get 
\[
|u-v|^p \leq (|a|^p + |b|^p)2^{p/q}.
\]
Since $p/q = p-1,$ we have that $2^{-(p-1)}|u-v|^p \leq |a|^p + |b|^p$. 

For the upper bound, it suffices to observe that the functions that map $t \to t^+$ and $t \to t^-$ for $t \in \mathbb{R}$ are both 1-Lipschitz. That is, $|u^+ - v^+| \leq |u - v|$ and $|u^- - v^-| \leq |u - v|$ from which we get $|u^+ - v^+|^p + |u^- - v^-|^p \leq 2 |u - v|^p$.

With Equation~\ref{eq-scalar-inequality} established, we can apply it coordinate-wise to $u = x_i, v=y_i$ to get 
\[
2^{-(p-1)}\sum_{i=1}^{m} \Delta_i^p \leq \sum_{i=1}^{m}(A_i^p + B_i^p) \leq 2 \sum_{i=1}^{m} \Delta_i^p.
\]

The result follows after taking the $p$-th root and observing that 
\[
||x-y||_p^p = \sum_{i}\Delta_i^p, \qquad ||\phi(x) - \phi(y)||_p^p = \sum_{i}(A_i^p + B_i^p).
\]
\end{proof}

We now proceed to the main theorem on sparse vector search.

\MainSparseTheorem*

\begin{proof}

Set $\delta(q,d) \coloneqq ||q - d||_p$ (for brevity, we will denote it as $\delta$) and define
\[
S \coloneqq \sum_{i\in T_q \cap T_d} \min\{q_i^p, d_i^p\}.
\]
Then, let $Z$ be the indicator random variable given by $Z \coloneqq \textbf{1}\{S \geq \gamma\} \in \{0,1\}$ and write the conditional means 
\[
\mu_1 \coloneqq \E[\delta \mid Z = 1], \qquad \mu_0 \coloneqq \E[\delta \mid Z = 0].
\]
 
Our first goal is to establish lower and upper bounds on $\mu_0$ and $\mu_1$, respectively.

First, note that for any $a \geq b \geq 0$ and $p \geq 1, (a - b)^p \leq a^p - b^p$. To see why this is true, note that 
\[
a^p - b^p = \int_{b}^a p x^{p-1}dx.
\]
For $p \geq 1,$ the mapping that sends $x$ to $x^{p-1}$ is non-decreasing on $[0,\infty).$ Since the integrand is monotonic, we get that 
\[
\int_{b}^{a}px^{p-1}dx \geq \int_{0}^{a-b}px^{p-1}dx = (a-b)^p.
\]

For each coordinate $i$, set $a = \max\{q_i, d_i\}$ and $b = \min\{q_i, d_i\}$. Now, using the above identity we note that 
\begin{align*}
    |q_i - d_i|^p &= (a - b)^p \\
    &\leq a^p - b^p \\
    &= (\max\{q_i, d_i\})^p - (\min\{q_i, d_i\})^p \\
    &= \left(\max\{q_i^p, d_i^p\} + \min\{q_i^p, d_i^p\}\right) - 2 \min \{q_i^p, d_i^p\} \\
    &= q_i^p + d_i^p - 2\min\{q_i^p, d_i^p\}.
\end{align*}
Summing over $i \in [m]$ yields
\[
\delta^p = \sum_{i=1}^{m}|q_i - d_i|^p \leq \sum_{i=1}^{m}(q_i^p+d_i^p) - 2 \sum_{i=1}^{m} \min\{q_i^p, d_i^p\}.
\]
By our normalization assumption and the definition of $S$, we get that $\delta^p \leq 2 - 2S$. 

When $Z= 1,$ we have that $S \geq \gamma.$ Thus, 
\[
\mu_1 = \E[\delta \mid Z = 1] \leq (2-2\gamma)^{1/p}.
\]

To get a lower bound on $\mu_0$, we will make use of assumption $(c)$. First, define $K= |T_q \cap T_d| = \sum_{i=1}^{m} \textbf{1}\{i \in T_q \cap T_d\}$ and observe that its expectation is given by 
\begin{align}
\E[K] &= \E\left[\sum_{i=1}^{m} \textbf{1}\{i\in T_q \cap T_d\}\right] \nonumber \\
&= \sum_{i=1}^{m} \E\left[\textbf{1}\{i\in T_q \cap T_d\}\right] \nonumber \\
&=\sum_{i=1}^{m} \Pr[i \in T_q]\Pr[i\in T_d] \nonumber \\
&\leq \sum_{i=1}^{m} \left(\frac{\tau}{m}\right)^2 \nonumber \\
&= \frac{\tau^2}{m}. \label{eq:upper_bound_on_Ek}
\end{align}

Now for any index set $T \subseteq [m],$ define 
\[
x_i^{(T)} =
\begin{cases}
    x_i, & i \in T,\\
    0 & \text{otherwise}
\end{cases}
\qquad
x \in \{q, d\}.
\]
Then, for any such $T$, 
\begin{align}\label{eq:rev-triangle-inequality}
\left(\sum_{i \in T} |q_i - d_i|^p\right)^{1/p} = ||q^{(T)} - d^{(T)}||_p \geq \left|||q^{(T)}||_p - ||d^{(T)}||_p\right|.
\end{align}
by the reverse triangle inequality.

Since $T_q$ indexes the top $\kappa$ coordinates of $q$, concentration of importance for queries implies 
\[
||q^{T_q}||_p^p = \sum_{i\in T_q}q_i^p \geq \alpha.
\]
Let $C \coloneqq \{C_d(R\kappa) \geq \alpha\}$ denote the document concentration event, which holds with probability at least $\rho > 0$ by assumption. On $C,$
\[
||d^{(T_d)}||_p^p = \sum_{i\in T_d} d_i^p \geq \alpha.
\]

Now consider the event $\{K = 0\}$ under which $T_q \cap T_d = \emptyset$. In this case, the $\alpha$ head mass of $q$ lies entirely outside $T_d,$ and the $\alpha$ head mass of $d$ lies entirely outside $T_q.$ Thus, 

\[
\sum_{i\in T_q}q_i^p \geq \alpha, \qquad \sum_{i\in T_q}d_i^p \leq 1 - \alpha.
\]
Similarly,
\[
\sum_{i\in T_d}d_i^p \geq \alpha, \qquad \sum_{i\in T_d}q_i^p \leq 1 - \alpha.
\]

Applying Equation~\ref{eq:rev-triangle-inequality} to both $T_q$ and $T_d$ yields
\[
\sum_{i\in T_q}|q_i - d_i|^p \geq \left(\alpha^{1/p}-(1-\alpha)^{1/p}\right)^p, \qquad \sum_{i\in T_d}|q_i - d_i|^p \geq \left(\alpha^{1/p}-(1-\alpha)^{1/p}\right)^p.
\]
on the event $\{K=0\} \cap C.$

Therefore, on this event, we obtain 
\[
\delta^p = \sum_{i=1}^{m}|q_i - d_i|^p \geq \sum_{i\in T_q}|q_i - d_i|^p + \sum_{i \in T_d}|q_i - d_i|^p \geq 2\left(\alpha^{1/p}-(1-\alpha)^{1/p}\right)^p.
\]
Hence, taking the $p$-th root and expectation, we obtain 
\begin{align}\label{eq:expectation-lower-bound}
\E[\delta \mid Z = 0, K = 0, C] \geq 2^{1/p}\left(\alpha^{1/p}-(1-\alpha)^{1/p}\right).
\end{align}

Using the law of total expectation, we can decompose $\mu_0$ to get 
\[
\mu_0 = \E[\delta \mid Z= 0] \geq \E[\delta \mid Z=0, \{K=0\} \cap C]\Pr[\{K=0\} \cap C \mid Z=0].
\]

To lower bound $\Pr[\{K=0\} \cap C \mid Z=0], $ observe that $K=0$ implies $Z=0$ (since in this case the overlap condition does not hold). Thus, $\{\{K = 0\} \cap C\} \subseteq \{Z=0\}$ and the conditional probability simplifies to 

\[
\Pr[\{K=0 \cap C\} \mid Z =0] = \frac{\Pr[\{K=0\} \cap C]}{\Pr[Z=0]}.
\]

Using a union bound on the complementary event, note that
\[
\Pr[(\{K=0\}\cap C)^c] = \Pr[\{K > 0\} \cup C^c] \leq \Pr[K > 0] + \Pr[C^c].
\]
Using assumption $(c)$, the fact that $\Pr[K>0] \leq \tau^2/m$ (from Equation~\ref{eq:upper_bound_on_Ek} and Markov's inequality), and by probabilistic concentration of importance on the documents, $\Pr[C] \geq \rho$ so $\Pr[C^c] \leq 1 - \rho.$ Thus, 
\[
\Pr[\{K=0\} \cap C] \geq 1 - \left(\frac{\tau^2}{m} + 1 - \rho \right) = \rho - \frac{\tau^2}{m}.
\]

The overlap assumption (i.e., $\Pr[Z =0] \leq 1 - \pi$) combined with the above result yields
\[
\Pr[\{K=0\} \cap C \mid Z = 0] \leq \frac{\rho - \tau^2/m}{ 1 - \pi}.
\]

Substituting Equation~\ref{eq:expectation-lower-bound} and the above result into the expression for $\mu_0$ we get that
\begin{align}
\mu_0 \geq \frac{\rho - \tau^2/m}{1-\pi} \left[2^{1/p} \left( \alpha^{1/p} - (1-\alpha)^{1/p}\right) \right] \eqqcolon Y_m. \label{eq:lower_bound_on_mu_0}
\end{align}

Now, with a lower bound on $\mu_0$ and an upper bound on $\mu_1,$ we can conclude the proof. Recall that by the law of total variance 
\begin{align*}
\Var[\delta] &= \Var[\E[\delta \mid Z]] + \E[\Var[\delta \mid Z]] \\
&\geq \Var[\E[\delta \mid Z]] \\
&= \Pr[Z=1]\Pr[Z=0](\mu_1 - \mu_0)^2.
\end{align*}

Observe that $Z=1$ implies that $K > 0$, from which we get $\Pr[Z = 1] \leq \Pr[K > 0] \leq \frac{\tau^2}{m}.$ Thus, $\Pr[Z=0] \geq 1 - \frac{\tau^2}{m}$. Since $0 \leq \delta \leq 2$, $\E[\delta] \leq 2.$ Therefore, combining the above probability bounds yields 
\[
\RelVar_m = \frac{\Var[\delta]}{\E[\delta]^2} \geq \frac{\pi}{4}\left( 1 - \frac{\tau^2}{m}\right)(\mu_1 - \mu_0)^2.
\]

Now, note that $\mu_0 - \mu_1 \geq Y_m - X$. Since $Y_m \to Y$ and by assumption $Y > X,$ there exists an integer $m_0$ such that $Y_m > X$ for all $m \geq m_0$.

Whenever $m  \geq m_0, \mu_0 - \mu_1 \geq Y_m - X > 0$, and therefore $
|\mu_1 - \mu_0| = \mu_0 - \mu_1
$. Next, fix any $\epsilon > 0$. Since $Y_m \to Y,$ there exists $m_1$ such that $|Y_m - Y| \leq \epsilon$. In particular, for all $m \geq m_1,$ $Y_m \geq Y - \epsilon$, which is equivalent to $Y_m - X \geq (Y-X) - \epsilon.$ Thus, for all $m \geq \max\{m_0, m_1\}$, the following holds
\[
|\mu_1 - \mu_0| = \mu_0 - \mu_1 \geq Y_m - X \geq (Y - X) - \epsilon
\]
Since $\epsilon > 0$ is arbitrary, this shows that 
\[
\liminf_{m \to \infty} |\mu_1 - \mu_0| \geq Y - X > 0.
\]

Therefore, 
\begin{align*}
\liminf_{m \to \infty}\RelVar_m & \geq \frac{\pi}{4} \liminf_{m \to \infty}\left( (1 - \tau^2/m)(\mu_1 - \mu_0)^2\right) \\
&= \frac{\pi}{4}\left(\lim_{m\to\infty} (1 - \tau^2/m)\right)\left(\liminf_{m \to \infty} (\mu_1 -\mu_0)^2\right) \\
&\geq \frac{\pi}{4} (Y-X)^2 \\
&> 0
\end{align*}

By Theorem \ref{durrantconverse}, strictly positive relative variance implies stability of the sparse vector search problem, which concludes the proof.

\end{proof}

\begin{remark}
We note out that we state the theorem using $\liminf \RelVar_m$ instead of $\lim \RelVar_m$. This is convenient because bounding $\liminf$ away from zero is sufficient to prove stability without needing to prove that the relative variance terms converge, which may or may not happen. 
\end{remark}

\subsection{Empirical Validation of Sparse Theorem Assumptions}
\label{sec:empirical-sparse-appendix}

We empirically validate the assumptions of Theorem \ref{thm:stability-coi-overlap} on $30,522$-dimensional SPLADE-V3~\cite{lassance2024splade} embeddings across several benchmark datasets. We verify concentration of importance, overlap of importance, and finally check the gap condition $Y > X$ required by the theorem. 

\paragraph{Concentration of importance.} We fix a query head size $\kappa$ and examine document head sizes of the form $R\kappa$ where $R$ is a small integer at most 8. For each candidate $R,$ we estimate $\rho = \Pr[C_d(R\kappa) \geq \alpha]$ over a random sample of documents and select the smallest value such that $\rho$ exceeds a target level, which we set to $0.9$. The resulting $(\kappa, R, \rho)$ values are reported in Table ~\ref{tab:sparse-theorem-constants}.

\paragraph{Overlap of importance.} For random query-document pairs we compute the overlap statistic 
\[
S = \sum_{i \in T_q \cap T_d} \min \{q_i^p, d_i^p\}
\]
Empirically we measure $\pi_{\max} = \Pr[S > 0]$, the fraction of pairs with strictly positive overlap. We set $\gamma$ to be the $(1 - \pi_{\max})$ empirical quantile of $S,$ i.e., the smallest positive overlap observed. The resulting tail mass is $\hat{\pi} = \Pr[S \geq \gamma]$. This guarantees that $\gamma > 0$ while using the largest feasible overlap tail mass supported by the data.  

Since the theorem requires $Y > X$, we compute these expressions for each dataset and report the value of $Y - X$ in Table ~\ref{tab:sparse-theorem-constants}. Similarly, to check that stability holds, we compute the stability ratios over sampled queries and report the values in the table. 

Among all datasets, the MS MARCO passage ranking dataset demonstrates the weakest empirical overlap of importance. The overlap tail mass $\hat{\pi}$ is considerably smaller and the smallest positive overlaps are very close to zero. To satisfy the condition that $Y > X,$ we increase the concentration parameters $\alpha$ and $\rho$ to meet the theorem gap requirement when overlap is extremely rare.

Overall, the experiments confirm strong concentration for both queries and documents, non-vanishing overlap with $\gamma > 0$ and $\hat{\pi} > 0,$ and strictly positive theoretical gap $Y - X$, collectively confirming that the assumptions of the theorem holds on real-world sparse embeddings. 

\begin{table}[t]
\centering
\footnotesize
\begin{tabular*}{\linewidth}{@{\extracolsep{\fill}} lcccccccc}
\toprule
\textbf{Corpus}
& $\alpha$
& $\kappa$
& $R$ 
& $\rho$
& $\gamma$
& $\hat{\pi}$ 
& $d_{\max}/d_{\min}$ 
& $Y - X$ \\
\midrule
HotpotQA & 0.85 & 24 & 3 & 0.9167 & 0.00157 & 0.5194 & 3.6966 & 0.0291 \\
TREC-COVID & 0.85 & 24 & 6 & 0.9443 & 0.00236 & 0.8213 & 4.7285 & 2.5830 \\
Natural-Questions & 0.85 & 18 & 6 & 0.9537 & 0.00137& 0.5335 & 2.9501 & 0.1324 \\
MS MARCO & 0.95 & 18 & 8 & 0.9691 & 0.00049 & 0.3267 & 3.2981 & 0.1159 \\
NFCorpus & 0.85 & 18 & 8 & 0.9615 & 0.0012 & 0.5412 & 2.3999 & 0.1713 \\
\bottomrule \\
\end{tabular*}
\caption{
Empirical theorem constants for SPLADE embeddings under $\ell_2$ distance. Here $\rho = \Pr[C_d(R\kappa) \geq \alpha]$ denotes the document CoI pass rate. The overlap tail threshold $\gamma$ is chosen as the largest feasible quantile level for which the empirical tail mass remains $\hat{\pi} = \Pr[S \geq \gamma]$ remains positive. $d_{\max}/d_{\min}$ is the mean stability ratio over the sampled queries. Across all datasets we observe strong concentration and controlled overlap and stability ratios well above the instability threshold, confirming the applicability of the stated theorem. Since we operate over $\ell_2$ distance, we fix the parameter $p=2$ in this analysis. 
\label{tab:sparse-theorem-constants}
}

\end{table}


\end{document}